\documentclass[11pt]{amsart}

\usepackage[english]{babel}
\usepackage[utf8]{inputenc}

\usepackage{a4wide}

\usepackage{amsfonts}
\usepackage{amsmath}
\usepackage{amsthm}
\usepackage{mathtools}
\mathtoolsset{showonlyrefs}
\usepackage{amssymb}

\usepackage{graphicx}
\usepackage{subfigure}
\usepackage[justification=centering]{caption}

\usepackage{enumerate}

\usepackage[title]{appendix}

\usepackage{hyperref}
\usepackage{url}

\usepackage[foot]{amsaddr}

\setcounter{MaxMatrixCols}{30}
\providecommand{\U}[1]{\protect\rule{.1in}{.1in}}
\newtheorem{theorem}{Theorem}
\newtheorem*{acknowledgement}{Acknowledgements}

\newtheorem{conjecture}[theorem]{Conjecture}
\newtheorem{corollary}[theorem]{Corollary}

\newtheorem{definition}[theorem]{Definition}

\newtheorem{lemma}[theorem]{Lemma}

\newtheorem{proposition}[theorem]{Proposition}

\renewenvironment{proof}[1][Proof]{\noindent\textbf{#1.} }{\ \rule{0.5em}{0.5em}}

\providecommand{\norm}[1]{\lVert#1\rVert}

\newcommand{\R}{\mathbb{R}}

\newcommand{\Rn}{\mathbb{R}^n}
\newcommand{\Z}{\mathbb{Z}}

\newcommand{\Lt}[1][n]{L^2 \left(\R^{#1}\right)}

\newcommand{\G}{\mathcal{G}}

\newcommand{\F}{\mathcal{F}}

\newcommand{\indicator}{\raisebox{2pt}{$\chi$}}

\renewcommand{\l}{\lambda}
\renewcommand{\L}{\Lambda}

\newcommand{\J}{
	\left(
		\begin{array}{rc}
		0 & I\\
		-I & 0
		\end{array}
	\right)
}

\newcommand{\vol}{\textnormal{vol}}

\begin{document}

\title{Gaussian Distributions and Phase Space Weyl--Heisenberg Frames}
\author[M.~Faulhuber]{Markus Faulhuber}
\email{markus.faulhuber@univie.ac.at}
\author[M.A.~de Gosson]{Maurice A.~de Gosson}
\email{maurice.de.gosson@univie.ac.at}
\author[D.~Rottensteiner]{David Rottensteiner}
\email{david.rottensteiner@univie.ac.at}
\address[M.F., M.G., D.R.]{NuHAG, Faculty of Mathematics, University of Vienna\newline Oskar-Morgenstern-Platz 1, 1090 Vienna, Austria}
\address[M.F.]{Analysis Group, Department of Mathematical Sciences, NTNU Trondheim\newline Sentralbygg 2, Gløshaugen, Trondheim, Norway}



%
%
%
\begin{abstract}
	Gaussian states are at the heart of quantum mechanics and play an essential role in quantum information processing. In this paper we provide approximation formulas for the expansion of a general Gaussian symbol in terms of elementary Gaussian functions. For this purpose we introduce the notion of a ``phase space frame" associated with a Weyl-Heisenberg frame. Our results give explicit formulas for approximating general Gaussian symbols in phase space by phase space shifted standard Gaussians as well as explicit error estimates and the asymptotic behavior of the approximation.
\end{abstract}

\subjclass[2010]{primary: 42C15, 81S30, secondary: 35S05}
\keywords{Density Matrices, Gaussians, Phase Space Frame, Weyl--Heisenberg Frame, Wigner Distribution.}

\maketitle

\section{Introduction and Main Result}

Gaussian distribution functions play a central role in many areas of mathematics ranging from statistics to signal theory. This is not only because of their relative simplicity, but also because of their usefullness in many applied areas. For instance, in signal processing Gaussian smoothing, the blurring of an image by a Gaussian function, is used to reduce noise (theory of low-pass filters). In quantum mechanics the so-called Gaussian states are fundamental in quantum communication protocols \cite{ARL}, one of the reasons being that such states and their evolution are more accessible in the laboratory than their non-Gaussian counterparts \cite{se17}. The present paper is motivated by finding convenient expansions of distributions of the type
\begin{equation}\label{Gauss1}
	\rho(z) = \sqrt{\det\Sigma^{-1}} \, e^{-\pi \, \Sigma^{-1}z^{2}}
\end{equation}
where $\Sigma$ is a real symmetric positive matrix of order $2n\times2n$ and the vector $z=(x,p)$ is an element of $\R^{2n}\equiv \R^{n} \times \R^{n}$. Since the function $\rho$ is even and $L^{1}$-normalized, i.e.
\begin{equation}
	\int_{\R^{2n}} \rho(z)\, d^{2n}z=1,
\end{equation}
the matrix $\Sigma$ is the covariance matrix of $\rho$ viewed as a centered normal probability distribution:
\begin{equation}
	\Sigma = \int_{\R^{2n}} z\rho(z)z^{T} \, d^{2n}z \quad \mbox{ for } \quad z=\binom{x}{p}.
\end{equation}
In the quantum case, $\rho$ has the following interpretation: assume that the eigenvalues of the Hermitian matrix $\Sigma+\frac{i\hbar}{2}J$ are all non-negative ($J$ is the standard symplectic matrix); then the Weyl operator with symbol $(2\pi\hbar)^{n}\rho$ is the quantum mechanical density operator \cite{Birkbis}
\begin{equation}
	\widehat{\rho}=\sum_{j}\lambda_{j}\widehat{\rho}_{j},
\end{equation}
where $(\widehat{\rho}_{j})_{j}$ is a sequence of mutually orthogonal projectors on normalized vectors $\psi_{j}\in L^{2}(\mathbb{R}^{n})$ and $(\lambda_{j})$ a sequence of positive numbers summing up to one; it follows that we have
\begin{equation}\label{wigsum}
	\rho=\sum_{j}\lambda_{j}W\psi_{j},
\end{equation}
where $W\psi_{j}$ is the usual Wigner transform of $\psi_{j}$ defined by
\begin{equation}
	W\psi_{j}(x,p)=\left(  \tfrac{1}{2\pi\hbar}\right)^{n} \int_{\R^{n}}e^{-\frac{i}{\hbar}py}\psi_{j}(x+\tfrac{1}{2}y)\overline{\psi_{j}(x-\tfrac{1}{2}y)} \, d^{n}y.
\end{equation}
Now, formula \eqref{wigsum} raises the following non-trivial question: given a Gaussian distribution \eqref{Gauss1}, is it possible to write it as a linear combination of elementary Gaussians? In the quantum case this amounts to asking whether the functions $\psi_{j}$ could themselves be chosen as simpler Gaussians. In general, this problem is still open, however for the special cases of $2 \times 2$ and $2 \times 3$ systems a solution is known \cite{Horo_1996} and, to some extent, numerical methods for higher dimensions are available \cite{Lein_2006}.

The aim of this article is to study in some detail the decomposition of Gaussian mixed states into pure Gaussian states using the theory of Weyl--Heisenberg frames (also called Gabor frames). In general, these systems are non-orthogonal and overcomplete and, hence, neither the expansion coefficients are uniquely determined nor is their sum equal to 1. However, there is a canonical choice for the coefficients and it is possible to control their $\ell^2$-norm by the frame inequality \eqref{eq_frame} given further down. We reformulate the notion of Weyl--Heisenberg frames as in \cite{ACHA} using the Weyl--Wigner--Moyal formalism. This has the advantage of making the underlying symplectic covariance properties, which play an essential role in the study of Gaussians, more obvious. We will give explicit formulas; these turn out to be rather complicated, but are certainly of use in applications, both theoretical and numerical. Our main results are as follows.

\begin{theorem}\label{thm_main}
	For $x \in \Rn$ we denote the $n$-dimensional standard Gaussian by
	\begin{equation}
		\varphi^\hbar(x) = (\tfrac{1}{\pi \hbar})^{n/4} e^{-\tfrac{1}{2 \hbar} x^2}
	\end{equation}
	Let $\L = \delta^{-1} Q S \Z^{2n} \subset \R^{2n}$ be a lattice with $\delta > 0$, $Q \in SO(2n,\R) \cap Sp(n)$ and
	\begin{equation}
		S =
		\begin{pmatrix}
			L & 0\\
			L^{-T} P & L^{-T}
		\end{pmatrix}
		\in Sp(n),
	\end{equation}
	for which $L$ is invertible and $P = P^T$, whose associated Weyl-Heisenberg system $\G(\varphi^\hbar, \L)$ is a frame for $\Lt$. Consider the proper subspace $\mathcal{H}_{\varphi^\hbar} \subset \Lt[2n]$ given as the image of the linear operator
	\begin{align}
		U_{\varphi^\hbar}: \Lt &\to \Lt[2n], \\
		 \psi &\mapsto \Psi = (2 \pi \hbar)^{n/2} W(\psi,\varphi^\hbar),
	\end{align}
	where $W(\psi,\varphi^\hbar)$ is the cross-Wigner transform. Furthermore, we define
	\begin{equation}
		\widetilde{T}(z_\l) \Phi^{\hbar} = (2 \pi \hbar)^{n/2} W(\widehat{T}(z_\l) \varphi^\hbar,\varphi^\hbar),
	\end{equation}
	for the usual Heisenberg operator $\widehat{T}$. Then, the system
	\begin{equation}
		\widetilde{\G}(\Phi^\hbar, \L) = \{ \widetilde{T}(z_\l) \Phi^{\hbar}: z_\l \in \L \}
	\end{equation}
	is a (so-called phase space) frame for $\mathcal{H}_{\varphi^{\hbar}}$. For the associated (phase space) frame operator $\widetilde{A}_\G$ we obtain 	the stable approximation
	\begin{equation}\label{eq_convergence_thm}
		|| Id - \delta^{-2n} \hspace{2pt} \widetilde{A}_\mathcal{G} ||_{op} = \mathcal{O}\Bigl( e^{-\frac{\delta^2}{4 \hbar} ( ||L||^2_{\R^{n \times n}} + ||L||^{-2}_{\R^{n \times n}})} \Bigr),
	\end{equation}
	as the parameter $\delta \to \infty$.
\end{theorem}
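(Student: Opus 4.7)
The plan is to transfer the theorem about the phase space frame $\widetilde{\G}(\Phi^{\hbar},\L)$ into an equivalent statement about the ordinary Weyl--Heisenberg frame $\G(\varphi^{\hbar},\L)$ via the intertwining operator $U_{\varphi^{\hbar}}$. The point is that $U_{\varphi^{\hbar}}$ is an isometry onto $\mathcal{H}_{\varphi^{\hbar}}$ which, by its very definition as a multiple of the cross-Wigner map, satisfies the intertwining relation $\widetilde{T}(z)\,U_{\varphi^{\hbar}} = U_{\varphi^{\hbar}}\,\widehat{T}(z)$. Consequently $\widetilde{A}_{\G}$ on $\mathcal{H}_{\varphi^{\hbar}}$ and the usual Gabor frame operator $A_{\G}$ on $L^{2}(\R^{n})$ are unitarily equivalent, and so are $Id - \delta^{-2n}\widetilde{A}_{\G}$ and $Id - \delta^{-2n}A_{\G}$. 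The frame property itself is part of the hypothesis, so the work reduces to proving the asymptotic estimate for $\|Id - \delta^{-2n}A_{\G}\|_{op}$.

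Next I would exploit symplectic/metaplectic covariance to simplify the lattice geometry. Let $\widehat{Q}, \widehat{S}$ denote the metaplectic operators covering $Q,S \in Sp(n)$. Since $Q \in SO(2n) \cap Sp(n)$ fixes the Wigner function of $\varphi^{\hbar}$ (which is radial), one has $\widehat{Q}\varphi^{\hbar} = e^{i\theta}\varphi^{\hbar}$, so $\widehat{Q}$ leaves the Gabor system invariant up to an irrelevant phase. Using the covariance $\pi(QSz) = \widehat{QS}\,\pi(z)\,\widehat{QS}^{-1}$, the frame operator for the window $\varphi^{\hbar}$ on $\L = \delta^{-1}QS\Z^{2n}$ is unitarily equivalent to the frame operator for the squeezed Gaussian $\varphi_{S}^{\hbar} := \widehat{S}^{-1}\varphi^{\hbar}$ on the standard lattice $\delta^{-1}\Z^{2n}$. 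The Wigner transform of $\varphi_{S}^{\hbar}$ is the centered Gaussian $W\varphi_{S}^{\hbar}(z) = (\pi\hbar)^{-n} e^{-\frac{1}{\hbar}z^{T}(S^{T}S)z}$, so all relevant data are packaged in the symplectic positive matrix $M_{S}=S^{T}S$, which is explicitly computable from $L$ and $P$.

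The third step is to write $A_{\G}$ in its Janssen representation, which for a Gaussian window on a rational lattice takes the form
\begin{equation}
    A_{\G} = \delta^{2n}\sum_{z^{\circ}\in \delta\Z^{2n}} \langle \varphi_{S}^{\hbar},\pi(z^{\circ})\varphi_{S}^{\hbar}\rangle \, \pi(z^{\circ}),
\end{equation}
where $\delta\Z^{2n}$ is the adjoint (polar) lattice of $\delta^{-1}\Z^{2n}$. The $z^{\circ}=0$ contribution yields $\delta^{2n}\cdot Id$, and
\begin{equation}
    Id - \delta^{-2n}A_{\G} = -\sum_{z^{\circ}\in \delta\Z^{2n}\setminus\{0\}} \langle \varphi_{S}^{\hbar},\pi(z^{\circ})\varphi_{S}^{\hbar}\rangle \, \pi(z^{\circ}).
\end{equation}
A direct Gaussian computation gives $|\langle \varphi_{S}^{\hbar},\pi(z^{\circ})\varphi_{S}^{\hbar}\rangle| = e^{-\frac{1}{4\hbar}(z^{\circ})^{T}M_{S}z^{\circ}}$, and since $\pi(z^{\circ})$ is unitary, the operator norm is bounded by the triangle inequality applied to this Gaussian sum over $\delta\Z^{2n}\setminus\{0\}$.

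The main obstacle is then a careful geometric analysis of the sum. I have to (i) identify the smallest eigenvalue of $M_{S}$ along the principal directions of $\delta\Z^{2n}$, which through the block structure of $S$ yields the combination $\|L\|^{2}+\|L\|^{-2}$ in the exponent, and (ii) bound the tail of the resulting theta-type series by its leading term times a geometric factor which is $O(1)$ as $\delta\to\infty$. Separating the nearest-neighbour contribution from the remaining terms via a Minkowski-type comparison with a Gaussian integral gives the claimed $\mathcal{O}\bigl(e^{-\frac{\delta^{2}}{4\hbar}(\|L\|^{2}+\|L\|^{-2})}\bigr)$ rate. The bookkeeping of metaplectic phases and of the precise quadratic form $M_{S}$ — in particular showing that the worst-case direction in $\delta\Z^{2n}$ produces exactly the symmetric expression $\|L\|^{2}+\|L\|^{-2}$ rather than just its minimum — is the only genuinely delicate part of the argument.
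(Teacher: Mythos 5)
Your overall architecture coincides with the paper's: you reduce the phase-space statement to the ordinary Weyl--Heisenberg frame operator via the isometry $U_{\varphi^\hbar}$ and the intertwining relation $\widetilde{T}(z)U_{\varphi^\hbar}=U_{\varphi^\hbar}\widehat{T}(z)$ (this is exactly how the paper passes from Proposition~\ref{pro_error_estimate_WH} to Corollary~\ref{cor_error_estimate_phase_space}, and how the frame property \eqref{frametilda} is inherited), and you then invoke Janssen's representation to bound $\| Id-\delta^{-2n}\widehat{A}_{\mathcal{G}}\|_{op}$ by the sum of $|A\varphi^\hbar|$ over the nonzero points of the adjoint lattice, a Gaussian theta-type sum governed by $G=S^TS$. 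Whether you evaluate $A\varphi^\hbar$ at $\delta S z_\l$ (as the paper does) or pass to the squeezed window $\widehat{S}^{-1}\varphi^\hbar$ on $\delta^{-1}\Z^{2n}$ by metaplectic covariance is immaterial; both give the same sum $\sum_{(k,l)\neq 0}e^{-\frac{\delta^2}{4\hbar}\left(|Lk|^2+|L^{-T}(Pk+l)|^2\right)}$.

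The gap is in your final step, which you yourself flag as the only delicate part but do not carry out, and whose description is not correct as stated. Extracting the rate by ``identifying the smallest eigenvalue of $M_S$ along the principal directions,'' i.e.\ by locating the worst-case nonzero lattice vector for the quadratic form $G z^2$, cannot produce the symmetric combination $\|L\|^2+\|L\|^{-2}$: the vector $(k,l)=(e_1,0)$ alone contributes $e^{-\frac{\delta^2}{4\hbar}|Le_1|^2}$ and $(0,e_1)$ contributes $e^{-\frac{\delta^2}{4\hbar}|L^{-T}e_1|^2}$, so a term-by-term analysis of the dominant exponent can only yield one of the two quantities (essentially their minimum), never their sum. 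The paper reaches the decoupled bound by a different mechanism: starting from $\sum_{k,l}e^{-\frac{\delta^2}{4\hbar}|Lk|^2}e^{-\frac{\delta^2}{4\hbar}|L^{-T}(Pk+l)|^2}$ it applies Poisson summation in $l$ for fixed $k$, uses the symmetry of the resulting sum over $m\in\Z^n$ to replace the oscillatory factor $e^{\frac{i}{\hbar}kPm}$ by $\cos\bigl(\tfrac{1}{\hbar}kPm\bigr)\leq 1$ --- thereby eliminating the shear $P$ entirely --- and then applies Poisson summation back, arriving at the product sum $\sum_{k,l}e^{-\frac{\delta^2}{4\hbar}(\|L\|^2k^2+\|L\|^{-2}l^2)}$ from which the stated exponent is read off. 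This double-Poisson, shear-removal step is the essential idea your sketch is missing; without it (or a substitute argument) the ``Minkowski-type comparison'' you propose does not deliver the claimed rate.
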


\begin{theorem}\label{thm_main2}
	Under the assumptions of Theorem \ref{thm_main}, for $\Psi \in \mathcal{H}_{\varphi^\hbar}$ the (phase space) frame expansion approximates the element $\Psi$ in the sense of
	\begin{equation}\label{eq_approx_thm}
		\Psi \approx \delta^{-2n} \widetilde{A}_\mathcal{G} \Psi = \delta^{-2n} \sum_{z_\l \in \L} (( \Psi \mid \widetilde{T}(z_\l) \Phi^\hbar)) \hspace{2pt} \widetilde{T}(z_\l) \Phi^\hbar,
	\end{equation}
	where the accuracy of the approximation is determined by \eqref{eq_convergence_thm}.
	
	In particular, for a generalized $n$-dimensional Gaussian defined by
	\begin{equation}
		\varphi_M^\hbar (x) = (\tfrac{1}{\pi \hbar})^{n/4} \det(Re(M))^{1/4} e^{-\tfrac{1}{2 \hbar} M x^2},
	\end{equation}
	with $M = M^*$ and $Re(M) > 0$, we have
	\begin{equation}
		\Phi_F^\hbar (z) = U_{\varphi^\hbar} \varphi_M^\hbar (z) = \det(Re(M))^{1/4} \det(\tfrac{1}{2} (M+I))^{-1/2} \hspace{2pt} e^{-\tfrac{1}{\hbar} F z^2} \in \mathcal{H}_{\varphi^\hbar},
	\end{equation}
	with
	\begin{equation}
		F = 
		\begin{pmatrix}
			2 (M+I)^{-1} M & -i(M-I)(M+I)^{-1}\\
			-i(M+I)^{-1} (M-I)^{-1} & 2 (M+I)^{-1}
		\end{pmatrix}.
	\end{equation}
	We obtain the numerically stable approximation
	\begin{equation}
		\Phi_F^\hbar \approx \delta^{-2n} \sum_{z_\l \in \L} c_{z_\l} \hspace{2pt} \widetilde{T}(z_\l) \Phi^\hbar
	\end{equation}
	with convergence rate~\eqref{eq_convergence_thm} from Theorem~\ref{thm_main} and coefficients
	\begin{equation}
		c_{z_\l}
		= \bigl( \tfrac{2}{\pi \hbar} \bigr)^{n/2} \hspace{2pt} \det(F+I)^{-1/2} \hspace{2pt} e^{-\frac{1}{4 \hbar} z_\l^2}
		\hspace{2pt} e^{-\tfrac{1}{4 \hbar} (F+I)^{-1} \bigl( (J -iI) z_\l \bigr)^2}.
	\end{equation}
\end{theorem}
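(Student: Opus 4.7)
The approximation statement follows directly from Theorem~\ref{thm_main}: applying the operator inequality to $\Psi$ and using the definition of the operator norm gives
\begin{equation}
\| \Psi - \delta^{-2n} \widetilde{A}_\mathcal{G} \Psi \|_{L^2(\R^{2n})} \leq \| Id - \delta^{-2n} \widetilde{A}_\mathcal{G} \|_{op} \, \| \Psi \|_{L^2(\R^{2n})},
\end{equation}
so the rate \eqref{eq_convergence_thm} transfers from the operator bound to the approximation error.

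For the explicit Gaussian case, my plan is to reduce everything to two Gaussian integrations. To obtain the formula for $\Phi_F^\hbar = (2\pi\hbar)^{n/2} W(\varphi_M^\hbar, \varphi^\hbar)$ I would substitute the two Gaussians into the definition of the cross-Wigner transform and expand
\begin{equation}
(x+\tfrac{y}{2})^T M (x+\tfrac{y}{2}) + (x-\tfrac{y}{2})^T (x-\tfrac{y}{2}) = x^T(M+I)x + x^T(M-I)y + \tfrac{1}{4} y^T(M+I)y,
\end{equation}
using $M = M^*$ so that $x^T M x$ is real. Completing the square in $y$ against the linear term $-ip/\hbar$ supplied by the Fourier kernel and carrying out the resulting Fresnel--Gauss integral — which converges thanks to $\mathrm{Re}\,M > 0$ — produces a Gaussian in $z = (x,p)$ whose $x^T x$-coefficient simplifies via the Schur-type identity $(M+I) - (M-I)(M+I)^{-1}(M-I) = 4M(M+I)^{-1}$, and whose mixed $x,p$ and pure $p,p$ terms read off the remaining blocks of $F$. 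The prefactor $\det(\tfrac{1}{2}(M+I))^{-1/2}$ assembles from the standard Gaussian normalization and the overall $(2\pi\hbar)^{n/2}$.

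For the coefficients, I would apply Moyal's identity together with the covariance relation $\widetilde{T}(z_\l)\Phi^\hbar = U_{\varphi^\hbar}(\widehat{T}(z_\l)\varphi^\hbar)$ built into Theorem~\ref{thm_main}. Since $\varphi^\hbar$ is $L^2$-normalized, Moyal's formula reduces the phase-space inner product to
\begin{equation}
c_{z_\l} = \langle \varphi_M^\hbar, \widehat{T}(z_\l) \varphi^\hbar \rangle_{L^2(\R^n)},
\end{equation}
which is now a single Gaussian integral on $\R^n$: writing out $\widehat{T}(z_\l) \varphi^\hbar$ with its phase and completing the square in $x$ for the complex quadratic form $M+I$ against the linear piece $ip_\l - x_\l$, one integrates to obtain a Gaussian in $z_\l$. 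The main bookkeeping obstacle — and the only non-routine step — is recognising the resulting exponent as $-\tfrac{1}{4\hbar} z_\l^2 - \tfrac{1}{4\hbar}(F+I)^{-1}\bigl((J-iI)z_\l\bigr)^2$. This amounts to a $2n\times 2n$ block computation showing that, under the change of variables $z_\l \mapsto (J-iI)z_\l = (p_\l - ix_\l, -x_\l - ip_\l)$, the inverse Schur complement of $F+I$ reproduces $(M+I)^{-1}$ on the relevant complex linear combinations, and likewise that $\det(F+I)$ absorbs both $\det(M+I)$ from this step and $\det(\tfrac{1}{2}(M+I))$ from the first step, up to the stated factor $(2/\pi\hbar)^{n/2}$.
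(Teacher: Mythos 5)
Your plan is correct in outline, and its first half (transferring the operator‑norm bound \eqref{eq_convergence_thm} to the approximation error via $|||\Psi-\delta^{-2n}\widetilde{A}_\G\Psi|||\le\Vert Id-\delta^{-2n}\widetilde{A}_\G\Vert_{op}\,|||\Psi|||$) is exactly what the paper does. For the Gaussian part you take a genuinely different route. The paper does not recompute $W(\varphi^\hbar_M,\varphi^\hbar)$ from scratch; it quotes Proposition~\ref{pro_cross_Wigner} (imported from \cite{Birkbis}) to get $\Phi^\hbar_F$, and then evaluates $c_{z_\lambda}=((\Phi^\hbar_F\mid\widetilde{T}(z_\lambda)\Phi^\hbar))$ as a direct $2n$-dimensional phase‑space Gaussian integral, writing out $\widetilde{T}(z_\lambda)\Phi^\hbar(z)=e^{-\frac{i}{\hbar}\sigma(z,z_\lambda)}\Phi^\hbar(z-\tfrac{1}{2}z_\lambda)$ and invoking the Fourier transform of a $2n$-dimensional Gaussian from Appendix~\ref{app_FT_Gauss}; this lands immediately on the stated form in terms of $(F+I)^{-1}$ and $(J-iI)z_\lambda$. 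You instead reduce via Moyal (equivalently, via the isometry $U_{\varphi^\hbar}$) to $c_{z_\lambda}=(\varphi^\hbar_M\mid\widehat{T}(z_\lambda)\varphi^\hbar)=(2\pi\hbar)^n A(\varphi^\hbar_M,\varphi^\hbar)(z_\lambda)$, an $n$-dimensional integral. That is legitimate, and can even be short‑circuited: by \eqref{rel2} and the evenness of $\varphi^\hbar$ one gets $c_{z_\lambda}=(2\pi\hbar)^n 2^{-n}W(\varphi^\hbar_M,\varphi^\hbar)(\tfrac{1}{2}z_\lambda)=C_{M,I}\,e^{-\frac{1}{4\hbar}Fz_\lambda^2}$ with no new integration at all. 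What your route buys is economy and a built‑in sanity check ($c_0=|||\Phi^\hbar|||^2=1$ when $M=I$); what it costs is precisely the step you defer, namely the identity converting $C_{M,I}\,e^{-\frac{1}{4\hbar}Fz_\lambda^2}$ into the stated expression involving $\det(F+I)^{-1/2}$ and $(F+I)^{-1}\bigl((J-iI)z_\lambda\bigr)^2$. That is not mere bookkeeping: it is the entire content of matching the theorem's formula, so you must either carry out the block computation or present the coefficient in the equivalent closed form and prove the equivalence. One caveat when you reconcile constants: evaluating the theorem's formula at $z_\lambda=0$, $F=I$ gives $(\tfrac{1}{2\pi\hbar})^{n/2}$ rather than $|||\Phi^\hbar|||^2=1$, so your Moyal‑normalized answer and the printed constant will differ by a factor $(2\pi\hbar)^{n/2}$; your route pins the normalization down unambiguously, so do not force agreement there.
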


The paper is structured as follows:
\begin{itemize}
	\item In Section \ref{sec_WH} we first review the main results we will need from the theory of Weyl--Heisenberg frames using the notation and approach in \cite{ACHA}; we thereafter propose an extension of this notion to phase space, where we make use of the notion of \textquotedblleft Bopp quantization\textquotedblright, which one of us has introduced and studied in \cite{gobopp1,gobopp2,Birkbis} (also see de Gosson and Luef \cite{golubopp}). The resulting frames, called phase space frames, are related to the usual Weyl--Heisenberg frames by a family of partial linear isometries $U_{\phi}: L^{2}(\mathbb{R}^{n})\longrightarrow L^{2}(\mathbb{R}^{2n})$ defined in terms of the cross-Wigner transform.
	
	\medskip
	
	\item In Section \ref{sec_Gauss_n} we study $n$-dimensional Gaussian mixed states. We begin by recalling results about the cross-Wigner and cross-ambiguity transforms of pairs of Gaussians. These formulas (which have their own intrinsic interest) are necessary for computing the Weyl--Heisenberg coefficients in a Gaussian frame expansion. We successively derive the results stated in Theorems~\ref{thm_main} and \ref{thm_main2}.
	
	\medskip
	
	\item In Section \ref{sec_Gauss_1} we restrict our study to Gaussian states in the case $n=1$. The reason for treating this case separately is that a full characterization of Gaussian Weyl--Heisenberg frames is known and $Sp(1) = SL(2,\R)$, which means that all lattices are symplectic. Also, the blocks appearing in the symplectic matrix $S$ are scalars in this case. Hence, some of the formulas simplify substantially.
\end{itemize}

Our notation and results are meant for mathematical physicists as well as for people working in time-frequency analysis. For the latter group, the dependence on $\hbar$ might seem irritating at the beginning, but by setting $\hbar = \tfrac{1}{2 \pi}$ one obtains the classical setting for time-frequency analysis.

\subsection{Notation and Terminology}

The generic point in phase space $\mathbb{R}^{2n}\equiv\mathbb{R}^{n}\times\mathbb{R}^{n}$ is denoted by $z=(x,p)$, where we have set $x=(x_{1},\dots,x_{n})$, $p=(p_{1},\dots,p_{n})$. The scalar product of two vectors $p$ and $x$ is denoted by $px$. When matrix calculations are performed, $z,x,p$ are viewed as column vectors. We write $M x^2$ for the quadratic form $x^T M x$, and $M xp$ for $p^T M x$. For an invertible matrix $M$ we write $M^{-T}$ for its transposed inverse. Moreover, we equip $\mathbb{R}^{2n}$ with the standard symplectic structure
\begin{equation}
	\sigma(z,z')=p x'-p' x,
\end{equation}
in matrix notation $\sigma(z,z')=(z')^{T}Jz$, where
$J = \J$ is the standard symplectic matrix. The symplectic group of $\mathbb{R}^{2n}$ is denoted by $\operatorname*{Sp}(n)$; it consists of all linear automorphisms of $\mathbb{R}^{2n}$ such that $\sigma(Sz,Sz')=\sigma(z,z')$ for all $z,z'\in\mathbb{R}^{2n}$. Working in the canonical basis, $\operatorname*{Sp}(n)$ is identified with the group of all real $2n \times 2n$ matrices $S$ such that $S^{T}JS=J$ (or, equivalently, $SJS^{T}=J$).

We will write $d^{2n}z=d^{n}x \hspace{2pt} d^{n}p$, where $d^{n}x=dx_{1} \dots dx_{n}$ and $d^{n}p=dp_{1} \dots dp_{n}$. The scalar product in $L^{2}(\mathbb{R}^{n})$ is denoted by
\begin{equation}
	(\psi | \phi)=\int_{\R^n}\psi(x)\overline{\phi(x)}d^{n}x
\end{equation}
and the associated norm by $|| \, . \, ||$ (in physicist's bra-ket notation we thus have $(\psi|\phi)=\langle\phi|\psi\rangle$). In phase space we denote the inner product by
\begin{equation}
	((\Psi | \Phi )) = \int_{\R^{2n}} \Psi(z) \overline{\Phi(z)} \, d^{2n}z
\end{equation}
and the induced norm by $|||\, . \,|||$.

The Schwartz space of rapidly decreasing functions is denoted by $\mathcal{S}(\mathbb{R}^{n})$ and its dual space by $\mathcal{S}'(\mathbb{R}^{n})$.
The Fourier transform on $\R^n$ is formally defined by
\begin{equation}
		\F \psi(p) = \left(\tfrac{1}{2\pi \hbar}\right)^{n/2} \int_{\R^n} \psi(x) e^{-\tfrac{i}{\hbar} p  x} \, d^nx.
	\end{equation}

\section{Weyl--Heisenberg Frames}\label{sec_WH}

\subsection{Definition and Terminology}

It is customary in frame theory to introduce Gabor frames, which we prefer to call Weyl--Heisenberg frames in this work because of their very close relationship with the theory of the Heisenberg group and Weyl pseudodifferential calculus.

Let $\phi$ be a (non-zero) square integrable function (the \textit{frame} \textit{window}) on ${\mathbb{R}}^{n}$, and let $\Lambda$ a discrete (countable) subset of ${\mathbb{R}}^{2n}$ (the index set of the frame). The associated \textit{Weyl--Heisenberg system} is the family of square-integrable functions
\begin{equation}
	{\mathcal{G}}(\phi,\Lambda)=\{\widehat{T}(z_{\lambda})\phi:z_{\lambda}\in\Lambda\},
\end{equation}
where $\widehat{T}(z)=e^{-i\sigma(\hat{z},z)/\hbar}$ is the Heisenberg operator and $\widehat{z} = (\widehat{x}, \widehat{p})$ is the formal position-momentum operator with $\widehat{x} \psi = x \psi$ and $\widehat{p} \psi = i \hbar \partial_t \psi$. The action of $\widehat{T}(z)$ in $\Lt[n]$ is given by
\begin{equation}\label{heiwe}
	\widehat{T}(z_{0})\phi(x)=e^{\tfrac{i}{\hbar}(p_{0}x-\frac{1}{2}p_{0}x_{0})}\phi(x-x_{0}),
\end{equation}
where $z_{0}=(x_{0},p_{0})$. They satisfy the commutation and addition properties
\begin{equation}
	\widehat{T}(z_{0}) \widehat{T}(z_{1}) = e^{\tfrac{i}{\hbar} \sigma(z_0,z_1)} \widehat{T}(z_{1}) \widehat{T}(z_{0})
\end{equation}
and
\begin{equation}
	\widehat{T}(z_{0} + z_1) = e^{-\tfrac{i}{2\hbar} \sigma(z_0,z_1)} \widehat{T}(z_{0}) \widehat{T}(z_{1}).
\end{equation}
See \cite{Birk,Birkbis,Littlejohn} for detailed studies of these operators.

We will call the system ${\mathcal{G}}(\phi,\Lambda)$ a \textit{Weyl--Heisenberg frame} if there exist constants $a,b>0$ (\textit{frame bounds}) such that
\begin{equation}\label{eq_frame}
	a||\psi||^{2}\leq \sum_{z_{\lambda}\in\Lambda}|(\psi \, | \, \widehat{T}(z_{\lambda})\phi)|^{2} \leq b||\psi||^{2}
\end{equation}
for every $\psi\in L^{2}({\mathbb{R}}^{n})$. If $a=b$, then the frame ${\mathcal{G}}(\phi,\Lambda)$ is said to be \emph{tight}. Note that a tight frame is an orthonormal basis if all frame elements of $\G(\phi,\L)$ are normalized and $a=b=1$ (see e.g.~\cite[Lemma 5.1.6.]{Gro01}). An immediate example of a Weyl--Heisenberg frame is given by the well-known Fourier (orthogonal) basis $\G(\indicator_{[0, 2 \pi \hbar)}, \Z \times \Z)$.

Given a Weyl--Heisenberg frame ${\mathcal{G}}(\phi,\Lambda)$, the associated frame operator $\widehat{A}_{{\mathcal{G}}}$ is given by
\begin{equation}\label{fop1}
	\widehat{A}_{{\mathcal{G}}}\psi=\sum_{z_{\lambda}\in\Lambda}(\psi \, | \, \widehat{T}(z_{\lambda})\phi) \hspace{2pt} \widehat{T}(z_{\lambda})\phi.
\end{equation}
It is a positive, bounded, self-adjoint, invertible operator on $L^{2}({\mathbb{R}}^{n})$ with bounded inverse, and we have
\begin{equation}\label{eq_frame_dual}
	\psi = \sum_{z_{\lambda}\in\Lambda} (\psi \, | \, \widehat{T}(z_{\lambda}) \widehat{A}_{{\mathcal{G}}}^{-1} \phi) \hspace{2pt} \widehat{T}(z_{\lambda})\phi
	= \sum_{z_{\lambda}\in\Lambda}(\psi \, | \, \widehat{T}(z_{\lambda})\phi) \hspace{2pt} \widehat{T}(z_{\lambda}) \widehat{A}_{{\mathcal{G}}}^{-1}\phi.
\end{equation}

We will see that Weyl--Heisenberg frames can be expressed, both, in terms of the cross-Wigner transform and the cross-ambiguity function.

The usefulness of Weyl--Heisenberg frames comes from the fact that they serve as \textquotedblleft generalized bases\textquotedblright\ in the Hilbert space $L^{2}({\mathbb{R}}^{n})$. The Weyl--Heisenberg expansion of an element $\psi$ in the Hilbert space $\Lt$ with respect to the window $\phi$ is given by
\begin{equation}\label{eq_Gabor_expansion}
	\psi = \sum_{z_\l \in \L} c_{z_\l} \, \widehat{T}(z_\l) \phi.
\end{equation}
Due to the possible over-completeness of the system (see Section \ref{sec_frame_set}), this expansion is not always unique and in general it is difficult to determine the coefficients $c_{z_\l}$.
One possibility to do so, is to introduce the canonical dual window to $\phi$, $\phi^\circ = \widehat{A}_{{\mathcal{G}}}^{-1} \phi$ (see equation \eqref{eq_frame_dual} above). For every Weyl--Heisenberg expansion of type \eqref{eq_Gabor_expansion} we have
\begin{equation}
	\sum_{z_\l \in \L} |c_{z_\l}|^2 \geq \sum_{z_\l \in \L} |(\psi|\widehat{T}(z_\l) \phi^\circ)|^2
\end{equation}
with equality only if
\begin{equation}
	c_{z_\l} = (\psi|\widehat{T}(z_\l) \phi^\circ).
\end{equation}

\subsection{Lattices, the Frame Set and an Approximation Formula}\label{sec_frame_set}

One of the most challenging questions in time-frequency analysis is to determine whether a Weyl--Heisenberg system already constitutes a frame. Only in this case we have a stable Weyl--Heisenberg expansion of every element in our Hilbert space with respect to the elements of the Weyl--Heisenberg system. This question seems to be far too general to be answered. At this point the \emph{frame set} enters the scene. For a fixed window $\phi$, the \textit{frame set} consists of all discrete point sets $\L \subset \R^{2n}$ which together with $\phi$ give a frame, that is
\begin{equation}\label{eq_frame_set}
	\mathfrak{F}(\phi) = \{ \L \subset \R^{2n} \, \colon \G(\phi,\L) \text{ is a frame}\}.
\end{equation}
For the 1-dimensional standard Gaussian function $\varphi^\hbar (x) = (\tfrac{1}{\pi \hbar})^{1/4} e^{-\tfrac{1}{2 \hbar} x^2}$ the system $\G(\varphi^\hbar,\L)$ is a frame for $\Lt[]$ whenever the lower Beurling density of $\L$ is greater than $(2 \pi \hbar)^{-1}$, i.e., if
\begin{equation}\label{eq_density}
	\delta_* (\L) = \liminf_{r \to \infty} \frac{1}{r^2} \min_{(x,p) \in \R^2} \# \{ z \in \L \, \colon z \in (x,p)+[0,r]^2 \} > (2 \pi \hbar)^{-1}.
\end{equation}
In this case, the necessary conditions imposed by the Balian-Low theorem and the density theorem are already sufficient as proved by Lyubarskii~\cite{Lyu92}, Seip~\cite{Sei92} and Seip and Wallst\'{e}n \cite{sewa92} (see also e.g.~\cite[chap.~8.4]{Gro01}).

Condition \eqref{eq_density} is one of the manifestations of the classical uncertainty principle and it implies that we cannot construct an orthonormal basis consisting of quantum displaced Gaussians. Allowing arbitrary point sets in phase space is too general for this work as we want to come up with explicit formulas. Therefore, we focus on lattices and for the rest of this work $\L$ will be a lattice, unless otherwise mentioned. We recall that a lattice is a discrete co-compact subgroup of $(\R^n, +)$ and that one can write a given lattice as
\begin{equation}
	\L = \mathcal{M} \, \Z^{2n}
\end{equation}
for some, not uniquely determined, invertible matrix $\mathcal{M} \in GL(2n,\R)$. The columns of $\mathcal{M}$ serve as a basis for $\L$ and its non-uniqueness results from the fact that we can choose from countably many bases. The volume and density of the lattice are given by
\begin{equation}
	\vol(\L) = |\det(\mathcal{M})| \quad \text{ and } \quad \delta(\L) = \frac{1}{\vol(\L)}
\end{equation}
respectively. For a lattice the Beurling density $\delta_*$ coincides with the density $\delta$ of the lattice. A lattice is called symplectic if the generating matrix is a multiple of a symplectic matrix, i.e., $\L = c S \, \Z^{2n}$ with $S \in Sp(n)$ and $c \neq 0$ (we may assume, without loss of generality, that $c > 0$ as $\L = -\L$). Note that for $n = 1$ every lattice is symplectic, whereas this is not true in higher dimensions. The adjoint lattice to $\L = \mathcal{M} \Z^{2n}$ is defined by
\begin{equation}	
	\L^\circ = J \mathcal{M}^{-T} \Z^{2n}.
\end{equation}	
If $\L$ is symplectic, we simply have $\L^\circ = \tfrac{1}{\vol(\L)^{1/n}} \L$. This follows from the definition of a symplectic matrix, $S J = J S^{-T}$, and $J \Z^{2n} = \Z^{2n}$. In this work we will exclusively consider symplectic lattices, although some of the results are valid for general index sets $\L$, which we will point out in those cases. 

There are some canonical subsets of $\mathfrak{F}(\phi)$ which we introduce now. We define the \textit{lattice frame set} of $\phi$ by
\begin{equation}
	\mathfrak{F}_{\L}(\phi) = \{\L \subset \R^{2n}, \, \L \text{ lattice } \colon \G(\phi,\L) \text{ is a frame}\}.
\end{equation}
A special class of lattices are the so-called separable lattices, which are of the form $\L_{\alpha, \beta} = \alpha \Z^n \times \beta \Z^n$. A generating matrix is given by
\begin{equation}
	S = \left(
	\begin{array}{cc}
		\alpha I & 0\\
		0 & \beta I
	\end{array}
	\right),
\end{equation}
and the density of the lattice is $\delta(\L_{\alpha, \beta}) = (\alpha \beta)^{-n}$. The \textit{reduced frame set} of $\phi$ is defined as
\begin{equation}
	\mathfrak{F}_{(\alpha,\beta)}(\phi) = \{ (\alpha, \beta) \in \R_+^2 \, \colon \G(\phi, \alpha \Z^n \times \beta \Z^n)  \text{ is a frame}\}.
\end{equation}
Clearly, if $(\alpha, \beta) \in \mathfrak{F}_{(\alpha,\beta)}(\phi)$ then $\alpha \Z^n \times \beta \Z^n \in \mathfrak{F}_{\L}(\phi)$. Hitherto, the only windows for which a complete characterization of the general and the lattice frame sets are known are generalized 1-dimensional Gaussians. There are some classes of functions for which the reduced frame set is fully known, we refer to the surveys by Gröchenig \cite{Gro14} and Heil \cite{Hei07} for more details. Note that for $n=1$ we have $Sp(1) = SL(2,\R)$ and that 1-dimensional generalized Gaussians can be written as the composition of a metaplectic operator and the standard Gaussian.


The metaplectic group $\operatorname*{Mp}(n)$ is a unitary representation of the double cover $\operatorname*{Sp}_{2}(n)$ of the symplectic group $\operatorname*{Sp}(n)$. The simplest (but not necessarily the most useful) way of describing $\operatorname*{Mp}(n)$ is to use its elementary generators $\widehat{J}$, $\widehat{V}_{-P}$, and $\widehat{M}_{L,m}$; denoting by $\pi^{\operatorname*{Mp}}$ the covering projection $\operatorname*{Mp}(n)\longrightarrow\operatorname*{Sp}(n)$ these operators and their projections are given by
\begin{align}
	\widehat{J}\psi(x) &  =e^{-in\pi/4} \mathcal{F} \psi(x),
	& & \pi^{\operatorname*{Mp}}(\widehat{J})=J\label{mp1}\\
	\widehat{V}_{-P}\psi(x) &  =e^{\frac{i}{2\hbar}Px^{2}}\psi(x),
	& & \pi^{\operatorname*{Mp}}(\widehat{V}_{-P})=V_{-P}\label{mp2}\\
	\widehat{M}_{L,m}\psi(x) &  =i^{m}\sqrt{|\det L|}\psi(Lx),
	& &	\pi^{\operatorname*{Mp}}(\widehat{M}_{L,m})=M_{L,m}.\label{mp3}
\end{align}
Here $\mathcal{F}$ is the unitary $\hbar$-Fourier transform and $V_{-P}$ ($P=P^{T}$), $M_{L,m}$ ($\det L\neq0$) are the symplectic generator matrices
\begin{equation}
	V_{-P} =
	\begin{pmatrix}
		I & 0\\
		P & I
	\end{pmatrix}
	\text{, \ }
	M_{L,m} =
	\begin{pmatrix}
		L^{-1} & 0\\
		0 & L^{T}
	\end{pmatrix}.
\end{equation}
The index $m$ in $\widehat{M}_{L,m}$ is the Maslov index, an integer corresponding to a choice of $\arg\det L$: $m$ is even if $\det L>0$ and odd if $\det L<0$.

Using these generators it is a simple exercise to show that the Heisenberg operators satisfy the symplectic covariance relations
\begin{equation}
	\widehat{T}(Sz_{0})=\widehat{S}\widehat{T}(z_{0})\widehat{S}^{-1}
\end{equation}
for every $\widehat{S}\in\operatorname*{Mp}(n)$, $S=\pi^{\operatorname*{Mp}}(\widehat{S})$.


It is always possible to construct frames with non-separable lattices from a given frame with separable lattice using the property of symplectic/metaplectic covariance as one of us has shown in \cite{ACHA} (see also \cite{Faulhuber_Invariance_2016}):

\begin{proposition}
	Let $\phi \in L^{2}(\mathbb{R}^{n})$. A Weyl--Heisenberg system $\mathcal{G}(\phi,\Lambda)$ is a frame if and only if $\mathcal{G}(\widehat{S}\phi,S\Lambda)$ is a frame; when this is the case both frames have the same bounds. In particular, $\mathcal{G}(\phi,\Lambda)$ is a tight frame if and only if $\mathcal{G}(\widehat{S}\phi, S \Lambda)$ is a tight frame.
\end{proposition}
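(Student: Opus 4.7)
The plan is to reduce the frame inequality for $\mathcal{G}(\widehat{S}\phi, S\Lambda)$ to that for $\mathcal{G}(\phi,\Lambda)$ by exploiting two facts recalled just above the proposition: the symplectic covariance identity $\widehat{T}(Sz_0)=\widehat{S}\widehat{T}(z_0)\widehat{S}^{-1}$, and the unitarity on $\Lt$ of every element $\widehat{S}\in\operatorname{Mp}(n)$, which is clear from the explicit generators $\widehat{J}$, $\widehat{V}_{-P}$, $\widehat{M}_{L,m}$, each of which is unitary.

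First I would fix $\psi\in\Lt$ and rewrite the middle term of the frame inequality \eqref{eq_frame} for $\mathcal{G}(\widehat{S}\phi,S\Lambda)$. Parametrising $S\Lambda$ bijectively as $\{Sz_\lambda : z_\lambda\in\Lambda\}$ (legitimate because $S\in Sp(n)$ is invertible on $\R^{2n}$) and applying covariance gives
\begin{equation}
\widehat{T}(Sz_\lambda)\widehat{S}\phi = \widehat{S}\widehat{T}(z_\lambda)\widehat{S}^{-1}\widehat{S}\phi = \widehat{S}\widehat{T}(z_\lambda)\phi.
\end{equation}
Unitarity of $\widehat{S}$ then yields $(\psi\mid\widehat{S}\widehat{T}(z_\lambda)\phi)=(\widehat{S}^{-1}\psi\mid\widehat{T}(z_\lambda)\phi)$, so
\begin{equation}
\sum_{Sz_\lambda\in S\Lambda}|(\psi\mid\widehat{T}(Sz_\lambda)\widehat{S}\phi)|^2 = \sum_{z_\lambda\in\Lambda}|(\widehat{S}^{-1}\psi\mid\widehat{T}(z_\lambda)\phi)|^2.
\end{equation}

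Next I would apply the hypothesised frame inequality for $\mathcal{G}(\phi,\Lambda)$ to the vector $\widehat{S}^{-1}\psi$, which produces bounds $a\|\widehat{S}^{-1}\psi\|^2$ and $b\|\widehat{S}^{-1}\psi\|^2$; unitarity once more collapses these to $a\|\psi\|^2$ and $b\|\psi\|^2$. This is precisely the frame inequality for $\mathcal{G}(\widehat{S}\phi,S\Lambda)$ with the same constants. The reverse implication is obtained verbatim by replacing $(\phi,\Lambda,\widehat{S},S)$ with $(\widehat{S}\phi,S\Lambda,\widehat{S}^{-1},S^{-1})$, using that $\operatorname{Mp}(n)$ and $Sp(n)$ are groups. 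The tightness statement is then automatic, since the preservation of the two bounds is verbatim: $a=b$ is preserved.

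I do not anticipate a serious obstacle. The only places that require care are the reindexing of the lattice sum, which is immediate because $S$ is a bijection of $\R^{2n}$, and the consistent bookkeeping of covariance versus its inverse. It is worth remarking that nothing in the argument uses that $\Lambda$ is a lattice — only that it is a discrete index set — so the statement simultaneously covers the general frame set $\mathfrak{F}(\phi)$, the lattice frame set $\mathfrak{F}_\Lambda(\phi)$, and the tight-frame refinement invoked later when $\mathcal{G}(\varphi^\hbar,\Lambda)$ in Theorem \ref{thm_main} is transported to and from a separable lattice.
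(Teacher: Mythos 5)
Your proof is correct. The paper does not actually reproduce a proof of this proposition---it refers to de Gosson's earlier work \cite{ACHA}---but your argument (the symplectic covariance relation $\widehat{T}(Sz_{0})=\widehat{S}\widehat{T}(z_{0})\widehat{S}^{-1}$ combined with unitarity of $\widehat{S}\in\operatorname{Mp}(n)$, applied to the vector $\widehat{S}^{-1}\psi$, with the converse by the group structure) is exactly the standard argument used there, and your remark that only discreteness of $\Lambda$ is needed is also accurate.
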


From \cite{AB2,goetz} we recall the following generalization of the results originating from the work of Lyubarskii~\cite{Lyu92}, Seip~\cite{Sei92}, and Seip and Wallst\'{e}n~\cite{sewa92}:

\begin{proposition}
	For the multi-indices $\alpha=(\alpha_{1},\dots,\alpha_{n}\mathbb{)}, \beta=(\beta_{1}, \dots ,\beta_{n}\mathbb{)} \in \mathbb{Z}^n$ let $\Lambda_{\alpha, \beta} = \alpha{\mathbb{Z}}^{n}\times\beta{\mathbb{Z}}^{n}$. Let $\varphi^{\hbar}_j(x_j) = (\pi\hbar)^{-1/4}e^{-\tfrac{1}{2\hbar} x_j^{2}}$ be the 1-dimensional standard Gaussian and let $\varphi^{\hbar}(x)=(\pi\hbar)^{-n/4}e^{-\tfrac{1}{2\hbar} x^{2}} = \prod_{j=1}^n \varphi^{\hbar}_j(x_j)$ be the $n$-dimensional standard Gaussian. Then the following are equivalent:
		\begin{itemize}
		\item[(i)] $\mathcal{G}(\varphi^{\hbar},\Lambda_{\alpha\beta})$ is a frame.
		\item[(ii)] $\mathcal{G}(\varphi^{\hbar}_j,\Lambda_{\alpha_j \beta_j})$ is a frame for all $j = 1, \ldots, n$.
		\item[(iii)] $\alpha_{j}\beta_{j} < 2\pi\hbar$ for all $j = 1, \ldots, n$.
	\end{itemize}	
\end{proposition}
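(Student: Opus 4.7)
The strategy is to reduce the $n$-dimensional statement (i) to the coordinate-wise 1-dimensional statement (ii) via a tensor product factorization of the frame operator, and then to appeal to the classical 1-dimensional characterization of Lyubarskii, Seip and Seip--Wallst\'en (already quoted in the paragraph around \eqref{eq_density}) for the equivalence (ii) $\Leftrightarrow$ (iii).

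For the heart of the argument, (i) $\Leftrightarrow$ (ii), the first step is to exploit the separable structure. Identify $L^2(\R^n)$ with $\bigotimes_{j=1}^n L^2(\R)$ via $\psi_1 \otimes \cdots \otimes \psi_n \mapsto \psi(x) = \prod_j \psi_j(x_j)$. Because $\varphi^\hbar(x) = \prod_j \varphi_j^\hbar(x_j)$ and because a lattice point in $\Lambda_{\alpha,\beta}$ has the form $z_\lambda = (\alpha_1 k_1,\dots,\alpha_n k_n;\beta_1 m_1,\dots,\beta_n m_n)$ with $(k_j,m_j)\in\Z^2$, the explicit formula \eqref{heiwe} for $\widehat{T}(z_\lambda)$ shows that the phase and the translation both split across coordinates, giving
\begin{equation}
\widehat{T}(z_\lambda)\varphi^\hbar \;=\; \bigotimes_{j=1}^n \widehat{T}_j(\alpha_j k_j,\beta_j m_j)\varphi_j^\hbar,
\end{equation}
where $\widehat{T}_j$ denotes the 1-dimensional Heisenberg operator on $L^2(\R)$. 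Consequently the inner products factor on elementary tensors $\psi = \bigotimes_j \psi_j$:
\begin{equation}
(\psi \,|\, \widehat{T}(z_\lambda)\varphi^\hbar) \;=\; \prod_{j=1}^n (\psi_j \,|\, \widehat{T}_j(\alpha_j k_j,\beta_j m_j)\varphi_j^\hbar).
\end{equation}

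The second step is to turn this into a factorization of the frame operator. Reindex the sum defining $\widehat{A}_\mathcal{G}$ in \eqref{fop1} as a sum over $\prod_j \Z^2$ and use Fubini (justified by absolute convergence of the upper frame bound) to conclude that, on elementary tensors and hence (by density and continuity) on all of $\bigotimes_j L^2(\R)$,
\begin{equation}
\widehat{A}_\mathcal{G} \;=\; \bigotimes_{j=1}^n \widehat{A}_{\mathcal{G}_j}, \qquad \mathcal{G}_j = \mathcal{G}(\varphi_j^\hbar,\Lambda_{\alpha_j,\beta_j}).
\end{equation}
Each factor $\widehat{A}_{\mathcal{G}_j}$ is a bounded, positive, self-adjoint operator on $L^2(\R)$, and the spectrum of a tensor product of such operators is the product of the spectra. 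Thus $\widehat{A}_\mathcal{G}$ has a strictly positive lower bound and a finite upper bound if and only if each $\widehat{A}_{\mathcal{G}_j}$ does; equivalently, $\mathcal{G}(\varphi^\hbar,\Lambda_{\alpha,\beta})$ is a frame for $L^2(\R^n)$ iff each $\mathcal{G}(\varphi_j^\hbar,\Lambda_{\alpha_j,\beta_j})$ is a frame for $L^2(\R)$. Combined with the 1-dimensional theorem cited above (applied coordinate-wise), this closes the triangle of equivalences.

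The only delicate point is the passage from factorization on elementary tensors to factorization on all of $L^2(\R^n)$; this requires verifying that both sides are bounded operators agreeing on a total set, which in turn needs the upper frame bound of the $n$-dimensional system. That bound is however immediate from the product form once one knows each 1-dimensional system has finite upper bound --- so one should verify (iii)$\Rightarrow$(i) first to supply this, then obtain (i)$\Rightarrow$(ii) by restricting to $\psi = \psi_j \otimes \bigotimes_{k\neq j}\varphi_k^\hbar$ and reading off the 1-dimensional frame inequality, and finally obtain (ii)$\Rightarrow$(iii) from the 1-dimensional classification. This ordering avoids circularity and is the main bookkeeping point of the proof.
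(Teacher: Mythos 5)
Your argument is correct, and it coincides with the intended one: the paper itself gives no proof but recalls this proposition from the cited works on tensor products of frames (Bourouihiya; Pfander--Rashkov--Wang), whose content is exactly your factorization $\widehat{A}_{\mathcal{G}} = \bigotimes_{j} \widehat{A}_{\mathcal{G}_j}$ combined with the one-dimensional Lyubarskii/Seip--Wallst\'en characterization for (ii) $\Leftrightarrow$ (iii). Your bookkeeping is also sound: establishing (iii) $\Rightarrow$ (i) first via the product of the one-dimensional bounds, and then recovering (i) $\Rightarrow$ (ii) by testing on $\psi_j \otimes \bigotimes_{k \neq j} \varphi^{\hbar}_k$ (where the extraneous factors contribute a finite nonzero constant, the squared $\ell^2$-norm of the sampled Gaussian ambiguity function), avoids any circularity.
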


Combining these two results we get the following statement, whose proof can be found in de Gosson~\cite{ACHA}:

\begin{corollary}
	Let $\widehat{S}\in Mp(n)$ have projection $S\in Sp(n)$. The Weyl--Heisenberg system $\mathcal{G}(\widehat{S}\varphi^{\hbar},S\Lambda_{\alpha\beta})$ is a frame if and only if $\alpha_{j}\beta_{j}<2\pi\hbar$ for $1\leq j\leq n$. In this case the frame bounds of $\mathcal{G}(\widehat{S}\varphi^{\hbar},S\Lambda_{\alpha\beta})$ are the same as those of $\mathcal{G}(\varphi^\hbar,\Lambda_{\alpha\beta})$.
\end{corollary}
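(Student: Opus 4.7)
The plan is to combine the two preceding propositions directly; the corollary is essentially an immediate consequence and there is no serious obstacle. The only subtle point is verifying that the hypotheses line up exactly, so I will be careful to state each application precisely.

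First, I would invoke the metaplectic/symplectic covariance proposition with window $\varphi^\hbar \in L^2(\R^n)$ and lattice $\Lambda_{\alpha\beta}$: it asserts that $\mathcal{G}(\varphi^\hbar, \Lambda_{\alpha\beta})$ is a frame if and only if $\mathcal{G}(\widehat{S}\varphi^\hbar, S\Lambda_{\alpha\beta})$ is a frame, and that when both are frames they share the same frame bounds. In particular, the sharp bounds (and hence tightness) are preserved. This reduces the question about the non-separable lattice $S\Lambda_{\alpha\beta}$ and the (possibly non-Gaussian) window $\widehat{S}\varphi^\hbar$ to a question about the standard Gaussian $\varphi^\hbar$ and the separable lattice $\Lambda_{\alpha\beta}$.

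Next, I would apply the Lyubarskii--Seip--Wallstén-type proposition recalled just above: $\mathcal{G}(\varphi^\hbar, \Lambda_{\alpha\beta})$ is a frame if and only if $\alpha_j \beta_j < 2\pi\hbar$ for every $j = 1, \dots, n$. Chaining the two equivalences yields the stated characterization, and the assertion about frame bounds is inherited from the covariance proposition. This completes the argument.

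Since the proof is this short, the only thing worth emphasizing would be why the two propositions may be combined without loss: the first holds for \emph{every} $\widehat{S} \in \operatorname{Mp}(n)$ and every window in $L^2(\R^n)$, so in particular it applies with window $\varphi^\hbar$, which is precisely the window appearing in the second proposition. No hypothesis needs to be checked beyond $\widehat{S} \in \operatorname{Mp}(n)$ and the statement $\pi^{\operatorname{Mp}}(\widehat{S}) = S$, which is given.
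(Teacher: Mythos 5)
Your proposal is correct and follows exactly the route the paper intends: the text introduces the corollary with ``Combining these two results we get the following statement'' and defers the details to de Gosson's earlier work, and your chaining of the metaplectic covariance proposition with the Lyubarskii--Seip--Wallst\'en characterization is precisely that combination. Nothing is missing.
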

Note that for $n>1$, $Sp(n)$ is a proper subgroup of $SL(2n,\R)$ and, hence, the above results do not necessarily carry over to arbitrary lattices.

It was already observed by Folland \cite[Chap.~4]{Folland} that for $n = 1$ the only family of functions whose Wigner transforms are rotation invariant are the Hermite functions (which include the standard Gaussian). This is usually referred to as the ``rotational invariance" of Hermite functions; heuristically it gives an explanation for the fact that all systems ${\mathcal{G}}(\varphi^{\hbar},R\L)$ with $R \in SO(2n, \R)$ give a frame whenever $\delta_* > (2\pi\hbar)^{-1}$ (in this case $\L$ need not to be a lattice). Moreover, the frame bounds are the same regardless of $R$. See Faulhuber \cite{Faulhuber_Invariance_2016} and de Gosson \cite{JGP17} for an extension of this result to arbitrary Gaussian and Hermitian frames. This has led to the following conjecture in one of the author's doctoral thesis \cite{Faulhuber_PhD}:

\begin{conjecture} \label{ConjRotInv}
	For $\phi \in L^2(\R)$ the following are equivalent:
	\begin{enumerate}[(i)]
		\item $\phi$ is a Hermite function.
		\item $W\phi$ is rotation-invariant.
		\item The frames $\G(\phi,R\L)$ possess the same frame bounds for all $R \in SO(2, \R)$.
	\end{enumerate}
\end{conjecture}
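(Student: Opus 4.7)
The plan is to split the equivalence into three implications: (i)$\Leftrightarrow$(ii) is essentially classical and goes back to Folland \cite{Folland}, (i)$\Rightarrow$(iii) is a direct consequence of the symplectic/metaplectic covariance machinery already set up in the preceding section, and (iii)$\Rightarrow$(ii) is the principal obstacle and the reason the statement is left as a conjecture.

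\textbf{(i)$\Leftrightarrow$(ii).} Because $n=1$ we have $SO(2,\R)\subset SL(2,\R)=Sp(1)$, so every rotation $R_\theta$ admits a metaplectic lift $\widehat{R}_\theta\in Mp(1)$; up to unimodular scalars these lifts form the one-parameter group of fractional Fourier transforms $\{e^{-i\theta\widehat{H}}\}$ generated by the harmonic oscillator $\widehat{H}=\tfrac{1}{2}(\widehat{x}^{2}+\widehat{p}^{2})$. Symplectic covariance of the Wigner transform gives $W(\widehat{R}_\theta\phi)(z)=W\phi(R_\theta^{-1}z)$, so $W\phi$ is rotation-invariant if and only if each $\widehat{R}_\theta\phi$ is a unimodular multiple of $\phi$, and by Stone's theorem this forces $\phi$ to be an eigenfunction of $\widehat{H}$, i.e., a Hermite function. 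The reverse implication is the classical Laguerre formula expressing the Wigner distribution of a Hermite function as $e^{-|z|^{2}/\hbar}$ times a Laguerre polynomial in $|z|^{2}/\hbar$, which depends only on $|z|^{2}$ and is therefore manifestly $SO(2,\R)$-invariant.

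\textbf{(i)$\Rightarrow$(iii).} If $\phi$ is a Hermite function and $\widehat{R}\in Mp(1)$ is the metaplectic lift of $R\in SO(2,\R)$, then $\widehat{R}\phi=e^{i\alpha_R}\phi$ by the previous step. The proposition on symplectic covariance of Weyl--Heisenberg frames recalled above from \cite{ACHA} implies that $\G(\widehat{R}\phi,R\L)$ and $\G(\phi,\L)$ share the same frame bounds; since the unimodular factor $e^{i\alpha_R}$ cancels inside $|(\psi\mid e^{i\alpha_R}\widehat{T}(Rz_\l)\phi)|^{2}$, the system $\G(\phi,R\L)$ inherits those same bounds independently of $R$.

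\textbf{(iii)$\Rightarrow$(ii).} This is the hard direction and the core of the conjecture. The natural spectral approach is to observe that the optimal frame bounds are $\min\sigma(\widehat{A}_{\G(\phi,R\L)})$ and $\max\sigma(\widehat{A}_{\G(\phi,R\L)})$, and that by the Weyl--Wigner correspondence the Weyl symbol of $\widehat{A}_{\G(\phi,R\L)}$ is the $(R\L)$-periodization $\sum_{z_\l\in\L}W\phi(\cdot-Rz_\l)$ of the Wigner distribution. Constancy of the extremal spectrum in $R$ would then need to be leveraged into strong constraints on how these periodizations transform under rotations, and in a second step, by running $\L$ over a refining family of symplectic lattices (using density results for Gaussian-type windows and continuity in $R$), one would hope to upgrade discrete invariance to pointwise rotation-invariance of $W\phi$ itself. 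The principal obstruction is that coincidence of the extremal spectrum is much weaker than coincidence of operators: one would need additional invariants, for instance trace-class or Hilbert--Schmidt norms of $\widehat{A}_{\G(\phi,R\L)}-\widehat{A}_{\G(\phi,\L)}$, or a carefully chosen family of probe functions $\psi$ saturating the frame inequality, to exclude non-Hermite windows whose rotated frame operators merely happen to share extremal eigenvalues.
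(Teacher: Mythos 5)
This statement is a \emph{conjecture} in the paper: the authors give no proof, remarking only that the implications $(i)\Leftrightarrow(ii)\Rightarrow(iii)$ are well known (citing Folland) while $(iii)\Rightarrow(i)$ is open. Your proposal matches this status exactly. Your argument for $(i)\Leftrightarrow(ii)$ --- rotation-invariance of $W\phi$ plus Moyal/injectivity of the Wigner transform up to a phase forces $\widehat{R}_\theta\phi = c(\theta)\phi$, whence by Stone's theorem $\phi$ is an eigenfunction of the harmonic oscillator; conversely the Wigner--Laguerre formula --- is the standard Folland argument the paper alludes to. Your $(i)\Rightarrow(iii)$ correctly combines the eigenfunction property with the symplectic covariance proposition from Section 2 (the unimodular factor indeed cancels in the frame inequality), which is precisely the mechanism behind the "rotational invariance" discussion preceding the conjecture. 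And you correctly identify $(iii)\Rightarrow(ii)$ (equivalently $(iii)\Rightarrow(i)$, the form in which the paper states the open part) as the genuine content of the conjecture; your closing paragraph is an honest sketch of an attack, not a proof, and you say so. In short: you have not proved the statement, but neither has the paper, and your account of which pieces are known and how they are proved is accurate and consistent with the paper's own remarks.
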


For $n > 1$, Conjecture~\ref{ConjRotInv} in combination with Folland's result suggests the somewhat restricted conjecture.

\begin{conjecture} \label{ConjRotInv2}
	Let $n > 1$. Then for $\phi \in L^2(\R^n)$ the following are equivalent:
	\begin{enumerate}[(i)]
		\item $\phi$ is a Gaussian.
		\item $W\phi$ is rotation-invariant.
		\item The frames $\G(\phi,R\L)$ possess the same frame bounds for all $R \in O(2n, \R)$. 
	\end{enumerate}
\end{conjecture}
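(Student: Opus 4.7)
Since this is a conjecture, my plan is to attempt the four implications $(i)\Rightarrow(ii)$, $(ii)\Rightarrow(i)$, $(ii)\Rightarrow(iii)$ and $(iii)\Rightarrow(i)$. For $(i)\Rightarrow(ii)$ it suffices to observe that an isotropic Gaussian $\phi(x)=c\,e^{-a|x|^{2}}$ has Wigner transform $W\phi(z)=c'\,e^{-b|z|^{2}}$, which is radial on $\R^{2n}$ and hence manifestly $O(2n,\R)$-invariant.

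For $(ii)\Rightarrow(i)$ I would proceed in two stages. First restrict the symmetry to the subgroup $U(n)\cong O(2n,\R)\cap Sp(n)$ of symplectic rotations, which lifts to $\operatorname{Mp}(n)$; combined with the covariance identity $W(\widehat{S}\phi)(z)=W\phi(S^{-1}z)$, invariance of $W\phi$ under this subgroup forces $\phi$ to be a common eigenvector of every metaplectic rotation, hence (as an $n$-dimensional extension of Folland's one-dimensional argument) a radial isotropic Hermite-type function of the form $P(|x|^{2})\,e^{-|x|^{2}/(2\hbar)}$. Second, a single non-symplectic rotation $R\in O(2n,\R)\setminus U(n)$, for instance the coordinate swap $x_{1}\leftrightarrow p_{2}$, should break this extra rotation invariance unless the polynomial $P$ is constant, leaving a pure Gaussian. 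The key technical step is to write out $W\phi$ explicitly for such a Hermite-times-Gaussian and show that a non-constant $P$ produces an anisotropy that at least one non-symplectic orbit can detect.

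The implications involving $(iii)$ are, in my view, the genuinely delicate ones, precisely because rotations outside $U(n)$ do not lift to metaplectic operators and the standard covariance argument therefore fails. For $(ii)\Rightarrow(iii)$ I would appeal to the Janssen representation
\begin{equation}
	\widehat{A}_{\G(\phi,\L)}=\vol(\L)^{-1}\sum_{z\in\L^{\circ}}(\phi\,|\,\widehat{T}(z)\phi)\,\widehat{T}(z)
\end{equation}
together with the fact that $|(\phi\,|\,\widehat{T}(z)\phi)|$ is determined by $W\phi$ through the cross-ambiguity function and therefore inherits its rotational symmetry. The goal is to show that the spectrum of the right-hand side depends on $\L^{\circ}$ only through the values of $W\phi$ on $R$-orbits, so that replacing $\L$ by $R\L$ leaves the frame bounds unchanged.

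The converse $(iii)\Rightarrow(ii)$ is the main obstacle and, I suspect, the reason the statement remains a conjecture: one is given only the extreme eigenvalues of the family $\{\widehat{A}_{\G(\phi,R\L)}\}_{R\in O(2n,\R)}$ and must extract pointwise $O(2n,\R)$-symmetry of $W\phi$. A plausible route is to replace $\L$ by dilations $\delta^{-1}\L$ and invoke the asymptotic expansion of Theorem~\ref{thm_main}, expanding each frame operator as $\delta^{2n}I$ plus exponentially small corrections; the leading correction samples values of $W\phi$ on $\delta R\L^{\circ}$, and as $\delta$ and $R$ vary these orbits should become dense enough for equality of the extremal eigenvalues to propagate to pointwise equality of $W\phi$ along orbits. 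The delicate point, which I expect to be the crux of the difficulty, is that the hypothesis controls only extrema of quadratic forms rather than the full quadratic forms themselves, and bridging this gap appears to require a genuinely new idea.
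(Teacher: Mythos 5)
This statement is a \emph{conjecture}: the paper gives no proof of it, and explicitly records only that the implications $(i)\Leftrightarrow(ii)\Rightarrow(iii)$ are classical (Folland's characterization of radial Wigner transforms plus metaplectic covariance), in contrast to $(iii)\Rightarrow(i)$, which is open. Your proposal therefore cannot be measured against a proof in the paper; what can be said is that you have correctly diagnosed the situation. Your sketches of $(i)\Leftrightarrow(ii)$ and $(ii)\Rightarrow(iii)$ use the same toolkit the paper alludes to (Folland's result, covariance under the metaplectic lift of $O(2n,\R)\cap Sp(n)\cong U(n)$, Janssen's representation), and your final paragraph concedes that passing from $(iii)$ back to $(ii)$ requires ``a genuinely new idea'' --- which is precisely why the statement is labelled a conjecture. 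So the proposal is not a proof, and the step it leaves open is the essential one.

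Two inaccuracies in the parts you do sketch are worth flagging. First, for $(i)\Rightarrow(ii)$: the Wigner transform of $c\,e^{-a|x|^{2}}$ is proportional to $e^{-(2a\,|x|^{2}+|p|^{2}/(2a\hbar^{2}))}$, which is radial on $\R^{2n}$ only for the specific width $a=1/(2\hbar)$; so ``Gaussian'' in $(i)$ has to be read as the standard Gaussian up to a unimodular constant (in parallel with ``Hermite function'' in the $n=1$ conjecture), and your claim that every isotropic Gaussian satisfies $(ii)$ is false as stated. Second, your Janssen-based argument for $(ii)\Rightarrow(iii)$ only shows that the coefficients $(\phi\,|\,\widehat{T}(z)\phi)$ are unchanged when $\L^{\circ}$ is rotated; the operators $\widehat{T}(Rz_{\lambda})$ are obtained from $\widehat{T}(z_{\lambda})$ by conjugation with a unitary only when $R$ is symplectic, so equality of the coefficient data does not by itself give equality of the spectra of the two frame operators. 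The clean covariance argument, namely that $\G(\phi,R\L)$ is the image of $\G(\phi,\L)$ under $\widehat{R}$ up to phases, is available only for $R\in O(2n,\R)\cap Sp(n)$; for genuinely non-symplectic rotations even the ``easy'' direction $(ii)\Rightarrow(iii)$ needs more care than your sketch provides.
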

Let us note that for both conjectures the relations $(i) \Leftrightarrow (ii) \Rightarrow (iii)$ are well-known (see, e.g., \cite{Folland}) in contrast to $(iii) \Rightarrow (i)$.

We will discuss some properties of the Weyl--Heisenberg frame operator now.
Regarding the approximation we have the following result (for more details see \cite{FeiZim98}). For $\phi \in \Lt$ with $\norm{\phi} = 1$ we have
\begin{equation}\label{eq_approx_rate}
	\norm{Id - \vol(\L) \hspace{2pt} \widehat{A}_{\mathcal{G}}}_{op} \leq \sum_{z_\l^\circ \in \L^\circ \backslash \{0\}} |(\phi \mid \widehat{T}(z_\l^\circ) \phi)|.
\end{equation}
This result follows from Janssen's representation of the Weyl--Heisenberg frame operator $\widehat{A}_{\mathcal{G}}$ \cite{Janssen_Duality_1994}. The idea behind it is the following. If the lattice $\L$ is quite dense, then the adjoint lattice $\L^\circ$ is rather sparse and since $\bigl( \phi \mid \widehat{T}(z_\l^\circ) \phi \bigr) \in C_0(\R^{2d})$, the sum on the right-hand side of $\eqref{eq_approx_rate}$ tends to zero as $\vol(\L) \to 0$. The speed of convergence depends on the concrete function as well as on the lattice (see the works of Faulhuber \cite{Faulhuber_Hexagon_2017} and Faulhuber and Steinerberger \cite{Faulhuber_Steinerberger_2017} for a study on optimal lattices for Gaussian Weyl--Heisenberg frames). For Gaussians the convergence is of exponential order with respect to the density $\delta(\L) = \tfrac{1}{\vol(\L)}$ (see Sections~\ref{sec_Gauss_n} and \ref{sec_Gauss_1} for exact results).
It follows that the frame operator satisfies
\begin{equation}
	\lim_{\vol{\L} \to 0} \vol(\L) \hspace{2pt} \widehat{A}_{\mathcal{G}} = Id.
\end{equation}
So, as the density increases, the frame operator converges to the identity operator in the operator norm. The analogous statement is of course true for the inverse frame operator, meaning that
\begin{equation}
	\lim_{\vol(\L) \to 0} \tfrac{1}{\vol(\L)} \hspace{2pt} \widehat{A}_{\mathcal{G}}^{-1} = Id
\end{equation}
in the operator norm. This implies
\begin{equation}
	\lim_{\vol(\L) \to 0} \tfrac{1}{\vol(\L)} \hspace{2pt} \phi^\circ = \phi
\end{equation}
in the Hilbert space norm. At this point, it seems appropriate to introduce Feichtinger's algebra $\mathcal{S}_0 (\R^n)$ which (densely) contains the Schwartz space $\mathcal{S}(\R^n)$. Feichtinger's algebra, introduced by Feichtinger in the 1980s \cite{Fei81}, is easily characterized as follows;
\begin{equation}
	\phi \in \mathcal{S}_0(\R^n) \quad \Longleftrightarrow \quad W\phi \in L^1(\R^{2n}).
\end{equation}
It is a Banach space invariant under the Fourier transform and the action of the Heisenberg operators $\widehat{T}$. For more details we refer to the study by Jakobsen \cite{Jak18}. If $\phi \in \mathcal{S}_0(\R^n)$, then $\tfrac{1}{\vol(\L)}\phi^\circ$ also converges uniformly to the window $\phi$ \cite{FeiZim98}. Hence, for large density (small volume) of the lattice, we have the approximate Weyl--Heisenberg expansion
\begin{equation}\label{eq_approx_expansion}
	\psi(x) \approx \vol(\L) \sum_{z_\l \in \L} (\psi \, | \, \widehat{T}(z_\l) \phi ) \; \widehat{T}(z_\l) \phi(x).
\end{equation}

\subsection{Two Reformulations of the Frame Condition}\label{sec_reformulation}

We will see that Weyl--Heisenberg frames can be expressed both in terms of the cross-Wigner transform and the cross-ambiguity function.

Recall that the cross-Wigner transform of a pair of square-integrable functions $(\psi,\phi)$ is
\begin{equation}\label{w}
	W(\psi,\phi)(x,p)=\left(  \tfrac{1}{2\pi\hbar}\right)^{n} \int_{\R^n} e^{-\frac{i}{\hbar}py}\psi(x+\tfrac{1}{2}y)\overline{\phi(x-\tfrac{1}{2}y)} \, d^{n}y;
\end{equation}
the cross-ambiguity function of $(\psi,\phi)$ is in turn given by
\begin{equation}\label{a}
	A(\psi,\phi)(x,p)=\left(  \tfrac{1}{2\pi\hbar}\right)^{n} \int_{\R^n} e^{-\tfrac{i}{\hbar}py}\psi(y+\tfrac{1}{2}x)\overline{\phi(y-\tfrac{1}{2}x)} \, d^{n}y.
\end{equation}
For the (auto) Wigner transform and the ambiguity function we write
\begin{equation}
	 W\phi = W(\phi,\phi) \, \text{ and } \, A\phi = A(\phi,\phi),
\end{equation}
respectively. It was observed already by Klauder that these functions are (symplectic) Fourier transforms of each other, that is
\begin{equation}\label{rel1}
	W(\psi,\phi)=\F_{\sigma}A(\psi,\phi) \quad \text{ and } \quad A(\psi,\phi)=\F_{\sigma} W(\psi,\phi),
\end{equation}
for $\F_{\sigma}\psi(z) = \F \psi(Jz)$. We also have the algebraic relation
\begin{equation}
A(\psi,\phi)(z)=2^{-n}W(\psi,\phi^{\vee})(\tfrac{1}{2}z) \label{rel2},
\end{equation}
where $\phi^{\vee}(x) = \phi(-x)$ (see for instance \cite{Folland,Birkbis}). The cross-Wigner transform satisfies the Moyal identity
\begin{equation}\label{Moyal1}
	\bigl( \bigl( W(\psi,\phi)|W(\psi^{\prime},\phi^{\prime}) \bigr) \bigr) = \left(  \tfrac{1}{2\pi\hbar}\right)  ^{n}(\psi|\psi^{\prime})\overline{(\phi|\phi^{\prime})}
\end{equation}
for all $\psi,\phi,\psi^{\prime},\phi^{\prime}$ in $L^{2}({\mathbb{R}}^{n})$; using Plancherel's formula together with \eqref{rel2}, we also have
\begin{equation}\label{Moyalamb}
	\bigl( \bigl( A(\psi,\phi)|A(\psi^{\prime},\phi^{\prime}) \bigr) \bigr) = \left(  \tfrac{1}{2\pi\hbar}\right)  ^{n}(\psi|\psi^{\prime})\overline{(\phi|\phi^{\prime})}.
\end{equation}

To reformulate the frame conditions, we will need the following lemma:

\begin{lemma}
	For all $\psi,\phi \in L^{2}({\mathbb{R}}^{n})$
	we have
	\begin{equation}\label{atz}
		A(\psi,\phi)(z)=\left(  \tfrac{1}{2\pi\hbar}\right)  ^{n}(\psi|\widehat{T}(z)\phi).
	\end{equation}
\end{lemma}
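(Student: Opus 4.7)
The plan is to prove identity \eqref{atz} by a direct change of variables, matching the integrands term by term. Since both sides are sesquilinear and continuous in $(\psi,\phi) \in L^2(\R^n) \times L^2(\R^n)$, it suffices to verify the identity on a dense subspace where all manipulations are justified (say, for $\psi, \phi \in \mathcal{S}(\R^n)$), and then extend by continuity.

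First I would start from the right-hand side and unfold definitions: for $z = (x_0, p_0)$, using the explicit action \eqref{heiwe} of the Heisenberg operator and the definition of the $L^2$-scalar product,
\begin{equation}
(\psi \mid \widehat{T}(z)\phi) = \int_{\R^n} \psi(x)\, e^{-\tfrac{i}{\hbar}(p_0 x - \tfrac{1}{2} p_0 x_0)}\, \overline{\phi(x - x_0)} \, d^n x,
\end{equation}
where the complex conjugate has flipped the sign in the exponential. Next, in the integral defining $A(\psi,\phi)(z)$ from \eqref{a}, I would perform the substitution $y = x - \tfrac{1}{2} x_0$, so that $y + \tfrac{1}{2} x_0 = x$ and $y - \tfrac{1}{2} x_0 = x - x_0$, and the phase becomes $e^{-\tfrac{i}{\hbar} p_0 (x - \tfrac{1}{2} x_0)} = e^{-\tfrac{i}{\hbar}(p_0 x - \tfrac{1}{2} p_0 x_0)}$. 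This produces exactly the integrand of $(\psi \mid \widehat{T}(z)\phi)$, up to the prefactor $\left(\tfrac{1}{2\pi\hbar}\right)^n$, which is the claimed relation.

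There is no genuine obstacle here; the only subtlety worth flagging is bookkeeping of the symmetric quadratic phase $-\tfrac{1}{2} p_0 x_0$ in the Schrödinger-type representation of $\widehat{T}(z)$. That term is precisely what is generated by the shift $y = x - \tfrac{1}{2}x_0$ in the ambiguity-function integral, reflecting the use of the symmetric (Weyl) normalization throughout the paper rather than the asymmetric (time-frequency) one. Once this matching is noted, the lemma follows by inspection, and the identity extends from Schwartz to all of $L^2(\R^n) \times L^2(\R^n)$ by density together with the continuity of both sides (the left side is continuous via Cauchy--Schwarz and unitarity of $\widehat{T}(z)$, the right side via the $L^2$-boundedness of the cross-ambiguity function, which is a consequence of the Moyal-type identity \eqref{Moyalamb}).
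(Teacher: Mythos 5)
Your proof is correct and is exactly the direct change-of-variables computation that the paper alludes to when it omits the proof as ``a trivial consequence of \eqref{heiwe}''. The density/continuity remark at the end is harmless but not strictly needed, since for fixed $z$ the integrand is already in $L^1$ by Cauchy--Schwarz for arbitrary $\psi,\phi\in L^2(\R^n)$, so the substitution is valid directly.
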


We omit the proof as it is a trivial consequence of \eqref{heiwe} (see \cite{Birk,Birkbis}).

\begin{proposition}
	The system ${\mathcal{G}}(\phi,\Lambda)$ is a \textit{Weyl--Heisenberg frame with bounds }$a,b$ if and only if either of the following two equivalent conditions hold for all $\psi\in L^{2}({\mathbb{R}}^{n})$:
	\begin{align}
		(2\pi\hbar)^{2n}a||\psi||^{2}  &  \leq\sum_{z_{\lambda}\in\Lambda}|A(\psi,\phi)(z_{\lambda})|^{2}\leq(2\pi\hbar)^{2n}b||\psi||^{2}\label{k1}\\
		(4\pi\hbar)^{2n}a||\psi||^{2}  &  \leq\sum_{z_{\lambda}\in\frac{1}{2}\Lambda}|W(\phi,\psi)(z_{\lambda})|^{2}\leq(4\pi\hbar)^{2n}b||\psi||^{2}\label{k2}
	\end{align}
\end{proposition}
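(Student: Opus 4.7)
The plan is to deduce both reformulations directly from the preceding lemma and the algebraic identity \eqref{rel2} by pointwise substitution; no analysis beyond constant-rescaling and reindexing of the summation is required.

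To obtain \eqref{k1} I would apply the lemma pointwise at $z=z_\lambda$ and square moduli, which gives
\begin{equation}
|(\psi \,|\, \widehat{T}(z_\lambda)\phi)|^{2} = (2\pi\hbar)^{2n}\,|A(\psi,\phi)(z_\lambda)|^{2}.
\end{equation}
Summing over $z_\lambda\in\Lambda$ rewrites the middle term of the defining frame inequality \eqref{eq_frame} as a fixed constant times $\sum_{\Lambda}|A(\psi,\phi)(z_\lambda)|^{2}$; the two outer bounds $a\|\psi\|^{2}, b\|\psi\|^{2}$ rescale by the same factor, and the equivalence with \eqref{k1} holding with identical frame bounds is immediate.

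For \eqref{k2}, the next step is to invoke the Klauder-type identity \eqref{rel2}, $A(\psi,\phi)(z)=2^{-n}W(\psi,\phi^\vee)(\tfrac{1}{2}z)$. Squaring moduli gives
\begin{equation}
|A(\psi,\phi)(z_\lambda)|^{2}=4^{-n}\,|W(\psi,\phi^\vee)(\tfrac{1}{2}z_\lambda)|^{2}.
\end{equation}
To replace $W(\psi,\phi^\vee)$ by $W(\phi,\psi)$ as in \eqref{k2}, I would exploit the conjugation symmetry $\overline{W(\psi,\phi)(z)}=W(\phi,\psi)(z)$, the parity $W(\psi^\vee,\phi^\vee)(z)=W(\psi,\phi)(-z)$, and the fact that any lattice satisfies $\Lambda=-\Lambda$. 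Together these allow the parity substitution $\phi\mapsto\phi^\vee$ and the transposition of the two entries of $W$ to be absorbed after summing over the symmetric lattice and reindexing $z_\lambda\mapsto-z_\lambda$. Reindexing once more by $w=\tfrac{1}{2}z_\lambda\in\tfrac{1}{2}\Lambda$ converts the ambiguity sum into a sum of Wigner moduli on the half-lattice, and substituting back into \eqref{k1} accumulates the factor $4^{n}(2\pi\hbar)^{2n}=(4\pi\hbar)^{2n}$, yielding \eqref{k2}.

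The only delicate, though routine, point will be the careful handling of the parity identities above, verifying that the various sign flips and argument swaps introduced when passing from $|A(\psi,\phi)|^{2}$ to $|W(\phi,\psi)|^{2}$ really do cancel on the level of absolute values once the sum is taken over the symmetric lattice. Beyond this elementary bookkeeping the proof is purely an algebraic reshuffling of the pointwise identities recorded in the lemma and in \eqref{rel1}-\eqref{rel2}, substituted into \eqref{eq_frame}.
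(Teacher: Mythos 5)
Your overall route coincides with the paper's: \eqref{k1} is the frame inequality \eqref{eq_frame} rewritten pointwise via the lemma \eqref{atz}, and \eqref{k2} is then deduced from \eqref{k1} via \eqref{rel2} after rescaling the index set by $\tfrac12$. Your treatment of \eqref{k1} and of the constants (the factor $4^{n}(2\pi\hbar)^{2n}=(4\pi\hbar)^{2n}$) is correct.

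The step that does not work as you describe is the removal of the parity from the window. Identity \eqref{rel2} produces $W(\psi,\phi^{\vee})$, whereas \eqref{k2} involves $W(\phi,\psi)$. Conjugation symmetry lets you transpose the two arguments at the level of absolute values, but the parity identity $W(f^{\vee},g^{\vee})(z)=W(f,g)(-z)$ flips the parity of \emph{both} entries simultaneously: using it to strip the $\vee$ from $\phi$ necessarily attaches a $\vee$ to $\psi$. After summing over the symmetric set $\tfrac12\Lambda$ and reindexing $w\mapsto -w$ you are left with $\sum_{w\in\frac12\Lambda}|W(\phi,\psi^{\vee})(w)|^{2}$, and this residual $\psi^{\vee}$ does not cancel; for a fixed $\psi$ this sum is in general \emph{not} equal to $\sum_{w\in\frac12\Lambda}|W(\phi,\psi)(w)|^{2}$, so the claimed pointwise-in-$\psi$ cancellation fails. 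The equivalence asserted in the proposition is nevertheless true, but for a different reason: both conditions are quantified over all $\psi\in L^{2}(\R^{n})$, the map $\psi\mapsto\psi^{\vee}$ is a bijection of $L^{2}(\R^{n})$ with $\norm{\psi^{\vee}}=\norm{\psi}$, and hence the two-sided bound holding for every $\psi^{\vee}$ is the same statement as the bound holding for every $\psi$. You should replace the cancellation claim by this quantifier argument to close the derivation of \eqref{k2}. (The paper's own proof is a one-line citation of \eqref{atz} and \eqref{rel2} and glosses over exactly the same point.)
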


\begin{proof}
	Condition \eqref{k1} immediately follows from \eqref{atz} and, by \eqref{rel2}, it is equivalent to \eqref{k2}.
\end{proof}

\subsection{Extension of a Frame to Phase Space}

The cross-Wigner transform satisfies the translational property
\begin{equation}\label{ttfn}
	W(\widehat{T}(z_{0})\psi,\widehat{T}(z_{1})\phi)(z)=e^{-\frac{i}{\hbar}[\sigma(z,z_{0}-z_{1})+\frac{1}{2}\sigma(z_{0},z_{1})]} \hspace{2pt} W(\psi,\phi)(z-\tfrac{1}{2}(z_{0}+z_{1}))\nonumber
\end{equation}
for all $\psi,\phi\in\mathcal{S}^{\prime}(\mathbb{R}^{n})$ (see \cite{Folland,Birkbis,gowig}). In particular, taking $z_{1}=0$, we have
\begin{equation}
	W(\widehat{T}(z_{0})\psi,\phi)(z)=e^{-\frac{i}{\hbar}\sigma(z,z_{0})}W(\psi,\phi)(z-\tfrac{1}{2}z_{0}),
\end{equation}
which motivates the notation
\begin{equation}\label{interwig}
	\widetilde{T}(z_{0})W(\psi,\phi) = W(\widehat{T}(z_{0})\psi,\phi).
\end{equation}
This suggests to define, as in \cite{Birkbis}, the operators
\begin{align}\label{tildezo1}
	&\widetilde{T}(z_{0}): \mathcal{S}(\mathbb{R}^{2n}) \to \mathcal{S}(\mathbb{R}^{2n}), \\
	&\widetilde{T}(z_{0})\Psi(z) = e^{-\frac{i}{\hbar}\sigma(z,z_{0})}\Psi(z-\tfrac{1}{2}z_{0}).
\end{align}
These operators extend to unitary operators on $L^{2}(\mathbb{R}^{2n})$ and, a fortiori, to continuous automorphisms of $\mathcal{S}^{\prime}(\mathbb{R}^{2n})$. We have
\begin{equation}
	\widetilde{T}(z_{0})^{\ast}=\widetilde{T}(z_{0})^{-1}=\widetilde{T}(-z_{0}),
\end{equation}
and it is easily checked that the operators $\widetilde{T}(z_{0})$ satisfy the same commutation and addition properties as the Heisenberg operators, that is
\begin{align}
	\widetilde{T}(z_{0})\widetilde{T}(z_{1})  &  =e^{\tfrac{i}{\hslash}\sigma(z_{0},z_{1})}\widetilde{T}(z_{1})\widetilde{T}(z_{0})\label{HW3}\\
	\widetilde{T}(z_{0}+z_{1})  &  =e^{-\tfrac{i}{2\hslash}\sigma(z_{0},z_{1})}\widetilde{T}(z_{0})\widetilde{T}(z_{1}).\label{HW4}
\end{align}
One of us has studied these operators in relation with the extension of Weyl operators to phase space \cite{Birkbis}. This extension works as follows: let $a\in\mathcal{S}^{\prime}(\mathbb{R}^{2n})$ be viewed as a symbol; the corresponding Weyl operator
\begin{equation}
	\operatorname*{Op}\nolimits^{\mathrm{W}}(a):\mathcal{S}(\mathbb{R}^{n})\longrightarrow\mathcal{S}^{\prime}(\mathbb{R}^{n})
\end{equation}
is defined by
\begin{equation}\label{defweyl}
	\operatorname*{Op}\nolimits^{\mathrm{W}}(a) \psi(x)=\left(  \tfrac{1}{2\pi\hbar}\right)^{n} \int_{\mathbb{R}^{2n}}a_{\sigma}(z_{0})\widehat{T}(z_{0})\psi(x)d^{2n}z_{0}
\end{equation}
where $a_{\sigma}$ is the symplectic Fourier transform of $a$, formally given by
\begin{equation}
	a_{\sigma}(z)=\left(  \tfrac{1}{2\pi\hbar}\right)^{n} \int_{\mathbb{R}^{2n}}e^{-\frac{i}{\hbar}\sigma(z,z^{\prime})}a(z^{\prime})d^{2n}z^{\prime}.
\end{equation}
For example, we have $\widehat{\rho} = \operatorname*{Op}\nolimits^{\mathrm{W}}(\rho)$ for the density operator $\widehat{\rho}$ and its Wigner transform $\rho$ mentioned in the introduction.

One then defines the ``Bopp operator" $\operatorname*{Op}\nolimits^{\mathrm{B}}(a)$ (see \cite{goetal,Birkbis,gobopp2}) by replacing $\widehat{T}(z_{0})$ in \eqref{defweyl} by $\widetilde{T}(z_{0})$: for $\Psi\in\mathcal{S}(\mathbb{R}^{2n})$ we thus have
\begin{equation}\label{defatilde}
	\operatorname*{Op}\nolimits^{\mathrm{B}}(a) \Psi(z)=\left(  \tfrac{1}{2\pi\hbar}\right)^{n} \int_{\mathbb{R}^{2n}}a_{\sigma}(z_{0})\widetilde{T}(z_{0})\Psi(z)d^{2n}z_{0}.
\end{equation}
This quantization associates to the symbols $x$ and $p$ the operators
	\begin{align*}
		\tilde{x} = x + \frac{i \hbar}{2} \partial p \hspace{5pt} \mbox{and} \hspace{5pt} \tilde{p} = p - \frac{i \hbar}{2} \partial x,
	\end{align*}
respectively.

\subsection{Definition of a Phase Space Frame}

For fixed $\phi\in L^{2}({\mathbb{R}}^{n})$ we define the operator
\begin{align}
	U_{\phi}:L^{2}({\mathbb{R}}^{n}) &\to L^{2}({\mathbb{R}}^{2n}), \\
	\psi &\mapsto \Psi = (2\pi\hbar)^{n/2}W(\psi,\phi). \label{wpt}
\end{align}
By Moyal's formula we have $((U_{\phi}\psi|U_{\phi}\psi^{\prime}))=(\psi|\psi^{\prime})$, hence $U_{\phi}$ is an isometry of $L^{2}({\mathbb{R}}^{n})$ onto proper subspace of $L^{2}(\mathbb{R}^{2n})$, which we denote by $\mathcal{H}_{\phi}$. Observe that $U_{\phi}^{\ast}U_{\phi}$ is the identity operator on $L^{2}({\mathbb{R}}^{n})$ and that $\Pi_{\phi}=U_{\phi}U_{\phi}^{\ast}$ is the orthogonal projection of $L^{2}(\mathbb{R}^{2n})$ onto $\mathcal{H}_{\phi}$ (we have $\Pi_{\phi}^{\ast}=\Pi_{\phi}$, $\Pi_{\phi}^{2}=\Pi_{\phi}$ and the range of $U_{\phi}^{\ast}$ is $L^{2}({\mathbb{R}}^{n})$).

Let ${\mathcal{G}}(\Lambda,\phi)$ be a Weyl--Heisenberg frame. We associate to ${\mathcal{G}}(\Lambda,\phi)$ the phase space frame operator
\begin{equation}\label{apfi}
	\widetilde{A}_{{\mathcal{G}}}\Psi(x) = \sum_{z_{\lambda}\in\Lambda}((\Psi|\widetilde{T}(z_{\lambda})\Phi)) \, \widetilde{T}(z_{\lambda})\Phi,
\end{equation}
where $\Phi=U_{\phi}\phi=(2\pi\hbar)^{n/2}W\phi$.

\begin{proposition}
	Let $\widehat{A}_{{\mathcal{G}}}$ be the frame operator of ${\mathcal{G}}(\Lambda,\phi)$ and $\widetilde{A}_{{\mathcal{G}}}$ the associated phase space frame operator. Then we have the intertwining relation
	\begin{equation}\label{affi}
		\widetilde{A}_{{\mathcal{G}}}\Psi=U_{\phi}(\widehat{A}_{\G}\psi)
	\end{equation}
	for all $\Psi\in\mathcal{H}_{\phi}$.
\end{proposition}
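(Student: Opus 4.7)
The plan is to prove the intertwining relation by directly computing $\widetilde{A}_{\mathcal{G}} \Psi$ term by term and showing each summand matches what $U_\phi$ applied to the corresponding summand of $\widehat{A}_{\mathcal{G}} \psi$ produces. The two facts I will combine are the translational property \eqref{interwig} of the cross-Wigner transform and the isometry property of $U_\phi$ (a restatement of Moyal's formula \eqref{Moyal1}).

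First, I observe that $U_\phi$ commutes with the natural translations in the following sense: for any $z_\lambda \in \Lambda$,
\begin{equation}
U_\phi(\widehat{T}(z_\lambda)\phi) = (2\pi\hbar)^{n/2} W(\widehat{T}(z_\lambda)\phi, \phi) = (2\pi\hbar)^{n/2} \widetilde{T}(z_\lambda) W\phi = \widetilde{T}(z_\lambda) \Phi,
\end{equation}
where the middle equality is exactly \eqref{interwig} applied to the pair $(\phi, \phi)$. More generally, applied to any $\psi$, this shows $U_\phi(\widehat{T}(z_\lambda)\psi) = \widetilde{T}(z_\lambda) U_\phi \psi$, so the isometry $U_\phi$ intertwines $\widehat{T}(z_\lambda)$ and $\widetilde{T}(z_\lambda)$.

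Next, I use the isometry of $U_\phi$ on the inner products appearing in the phase space frame operator. Since $\Psi = U_\phi \psi$ and $\widetilde{T}(z_\lambda)\Phi = U_\phi(\widehat{T}(z_\lambda)\phi)$ by the computation above,
\begin{equation}
((\Psi \mid \widetilde{T}(z_\lambda)\Phi)) = ((U_\phi \psi \mid U_\phi(\widehat{T}(z_\lambda)\phi))) = (\psi \mid \widehat{T}(z_\lambda)\phi),
\end{equation}
where the second equality is Moyal's identity in the form $((U_\phi f \mid U_\phi g)) = (f \mid g)$.

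Finally, substituting both identifications into the definition \eqref{apfi} of $\widetilde{A}_{\mathcal{G}}$ and pulling the (bounded) linear operator $U_\phi$ through the (convergent) sum yields
\begin{equation}
\widetilde{A}_{\mathcal{G}} \Psi = \sum_{z_\lambda \in \Lambda} (\psi \mid \widehat{T}(z_\lambda)\phi)\, \widetilde{T}(z_\lambda) \Phi = U_\phi\!\left( \sum_{z_\lambda \in \Lambda} (\psi \mid \widehat{T}(z_\lambda)\phi)\, \widehat{T}(z_\lambda) \phi \right) = U_\phi(\widehat{A}_{\mathcal{G}} \psi),
\end{equation}
which is \eqref{affi}. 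There is no real obstacle here: the only subtlety worth noting is that the hypothesis $\Psi \in \mathcal{H}_\phi$ guarantees the existence of a unique $\psi \in L^2(\mathbb{R}^n)$ with $\Psi = U_\phi \psi$ (namely $\psi = U_\phi^\ast \Psi$), which is needed to identify the right-hand side, and that interchanging $U_\phi$ with the sum is legitimate because the frame condition ensures the sum converges unconditionally in $L^2$ and $U_\phi$ is continuous.
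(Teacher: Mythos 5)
Your proposal is correct and follows essentially the same route as the paper's proof: both rest on the translational property \eqref{interwig} to identify $\widetilde{T}(z_\lambda)\Phi$ with $U_\phi(\widehat{T}(z_\lambda)\phi)$, and on Moyal's identity \eqref{Moyal1} (with $\phi$ normalized) to convert the phase space inner products $((\Psi\mid\widetilde{T}(z_\lambda)\Phi))$ into $(\psi\mid\widehat{T}(z_\lambda)\phi)$, before pulling $U_\phi$ through the sum. Your explicit remarks on the unconditional convergence of the frame sum and the uniqueness of $\psi = U_\phi^\ast\Psi$ are welcome but do not change the substance of the argument.
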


\begin{proof}
	Since every $\Psi\in\mathcal{H}_{\phi}$ can be written as
	\begin{equation}
		\Psi=U_{\phi}\psi=(2\pi\hbar)^{n/2}W(\psi,\phi)
	\end{equation}
	for some uniquely determined $\psi\in L^{2}({\mathbb{R}}^{n})$, it suffices to show
	\begin{equation}\label{aga}
		\widetilde{A}_{{\mathcal{G}}}W(\psi,\phi)=W(\widehat{A}_{{\mathcal{G}}}\psi,\phi).
	\end{equation}
	for all $\psi \in L^2(\R^n)$. Using \eqref{interwig}, we have
	\begin{align}
		\widetilde{A}_{{\mathcal{G}}}W(\psi,\phi)  &  =(2\pi\hbar)^{n}\sum_{z_{\lambda} \in \Lambda} ((W(\psi,\phi)|\widetilde{T}(z_{\lambda})W\phi)) \hspace{2pt} \widetilde{T}(z_{\lambda}) W\phi\\
		&  =(2\pi\hbar)^{n}\sum_{z_{\lambda} \in \Lambda} ((W(\psi,\phi)|W(\widehat{T}(z_{\lambda})\phi,\phi)) \hspace{2pt} W(\widehat{T}(z_{\lambda})\phi,\phi),
	\end{align}
	on the one hand. On the other hand, by Moyal's identity \eqref{Moyal1}, we have
	\begin{equation}
		((W(\psi,\phi)|W(\widehat{T}(z_{\lambda})\phi,\phi)))=\left(  \tfrac{1}{2\pi\hbar}\right)  ^{n}(\psi|\widehat{T}(z_{\lambda})\phi)
	\end{equation}
	since $\phi$ is normalized, and hence
	\begin{align}
		\widetilde{A}_{{\mathcal{G}}}W(\psi,\phi) & = \sum_{z_{\lambda} \in \Lambda} (\psi|\widehat{T}(z_{\lambda})\phi) \hspace{2pt} W \left(\widehat{T}(z_{\lambda})\phi, \, \phi \right)\\
		& = \sum_{z_{\lambda} \in \Lambda} W\left((\psi|\widehat{T}(z_{\lambda})\phi) \hspace{2pt} \widehat{T}(z_{\lambda})\phi, \, \phi \right).
	\end{align}
	This is precisely \eqref{aga}.
\end{proof}

\begin{corollary}
	Let ${\mathcal{G}}(\phi,\Lambda)$ be a Weyl--Heisenberg frame and set $\Phi=U_{\phi}\phi$. Then the system $\widetilde{\G}(\Phi,\Lambda)=\{\widetilde{T}(z_{\lambda})\Phi:z_{\lambda}\in\Lambda\}$ is a frame on the Hilbert space $\mathcal{H}_{\phi}$ with frame operator $\widetilde{A}_{\G}$ and we have
	\begin{equation}
		a|||\Psi|||^{2}\leq\sum_{z_{\lambda}\in\Lambda}|((\Psi|\widetilde{T}(z_{\lambda})\Phi))|^{2}\leq b|||\Psi|||^{2} \label{frametilda}
	\end{equation}
	for all $\Psi\in\mathcal{H}_{\phi}$ where $a$ and $b$ are the same bounds as for the system $\mathcal{G}(\phi, \Lambda)$.
\end{corollary}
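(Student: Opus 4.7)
The plan is to reduce everything to the ordinary Weyl--Heisenberg frame inequality for $\mathcal{G}(\phi,\Lambda)$ by exploiting two facts that are already in hand: (i) $U_\phi$ is an isometry from $L^2(\R^n)$ onto $\mathcal{H}_\phi$ (so $|||U_\phi \psi||| = \|\psi\|$), and (ii) the intertwining identity \eqref{interwig}, which says that the $\widetilde{T}$-action on phase space is the image under $U_\phi$ of the Heisenberg action on $L^2(\R^n)$.

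First I would fix $\Psi \in \mathcal{H}_\phi$ and write $\Psi = U_\phi\psi$ for the unique $\psi \in L^2(\R^n)$, so that $|||\Psi|||^2 = \|\psi\|^2$. Next I would verify that each frame element lies in $\mathcal{H}_\phi$ and in fact equals the image under $U_\phi$ of a Heisenberg-translated window:
\begin{equation}
\widetilde{T}(z_\lambda)\Phi = (2\pi\hbar)^{n/2}\widetilde{T}(z_\lambda) W(\phi,\phi) = (2\pi\hbar)^{n/2} W(\widehat{T}(z_\lambda)\phi,\phi) = U_\phi\bigl(\widehat{T}(z_\lambda)\phi\bigr),
\end{equation}
which is exactly the relation \eqref{interwig} applied to $\phi$ on both slots.

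Once this is established, the Moyal identity \eqref{Moyal1} (equivalently, the isometry property $((U_\phi\psi|U_\phi\eta))=(\psi|\eta)$, valid because $\|\phi\|=1$) transfers the phase space coefficients to the original Weyl--Heisenberg coefficients:
\begin{equation}
((\Psi | \widetilde{T}(z_\lambda)\Phi)) = ((U_\phi\psi | U_\phi(\widehat{T}(z_\lambda)\phi))) = (\psi | \widehat{T}(z_\lambda)\phi).
\end{equation}
Summing the squared moduli over $z_\lambda \in \Lambda$ and applying the hypothesis that $\mathcal{G}(\phi,\Lambda)$ is a frame with bounds $a,b$ yields
\begin{equation}
a|||\Psi|||^2 = a\|\psi\|^2 \leq \sum_{z_\lambda \in \Lambda} |(\psi|\widehat{T}(z_\lambda)\phi)|^2 = \sum_{z_\lambda \in \Lambda} |((\Psi|\widetilde{T}(z_\lambda)\Phi))|^2 \leq b\|\psi\|^2 = b|||\Psi|||^2,
\end{equation}
which is exactly \eqref{frametilda} with the asserted bounds.

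Finally, the identification of $\widetilde{A}_{\G}$ as the frame operator of $\widetilde{\mathcal{G}}(\Phi,\Lambda)$ is already encoded in the preceding proposition via the intertwining $\widetilde{A}_{\G}\Psi = U_\phi(\widehat{A}_{\G}\psi)$: this immediately shows that $\widetilde{A}_{\G}$ is positive, bounded, and invertible on $\mathcal{H}_\phi$ (with the same bounds $a,b$), because $\widehat{A}_{\G}$ is such on $L^2(\R^n)$ and $U_\phi$ is a unitary from $L^2(\R^n)$ onto $\mathcal{H}_\phi$. There is no real obstacle here; the corollary is essentially bookkeeping, and the only subtlety is remembering to use $\|\phi\|=1$ when invoking Moyal so that the factor $\overline{(\phi|\phi)}$ appearing in \eqref{Moyal1} is exactly $1$ and the frame bounds transfer unchanged.
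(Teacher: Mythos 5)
Your proposal is correct and follows exactly the route the paper takes: the paper's proof is the one-line remark that the corollary ``is an immediate consequence of \eqref{eq_frame} by employing the isometry $U_\phi$,'' and your argument simply fills in the details of that reduction (writing $\Psi = U_\phi\psi$, identifying $\widetilde{T}(z_\lambda)\Phi = U_\phi(\widehat{T}(z_\lambda)\phi)$ via \eqref{interwig}, and transferring the coefficients by Moyal's identity). Your explicit note that $\|\phi\|=1$ is needed for the factor $\overline{(\phi|\phi)}$ to drop out is a worthwhile clarification the paper leaves implicit.
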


\begin{proof}
It is an immediate consequence of \eqref{eq_frame} by employing the isometry $U_\phi$.
\end{proof}

\section{Gaussian Mixed States on \texorpdfstring{$\R^n$}{R*n}}\label{sec_Gauss_n}

In this section we derive all results stated in Theorems~\ref{thm_main} and \ref{thm_main2}. We start with a quick characterization of Gaussian mixed states, which is followed by a description and explicit formulas of the cross-Wigner transform and the cross-ambiguity function of pairs of Gaussians. We conclude the section with explicit phase space frame expansions in terms of standard phase space Gaussians.

\subsection{Characterization}

For the moment, we go back to using the notation of density operators from the introduction. Let $\rho$ be a Gaussian of the type
\begin{equation}\label{rhobis}
	\rho(z) = \sqrt{\det\Sigma^{-1}}e^{-\pi \, \Sigma^{-1}z^{2}},
\end{equation}
where the covariance matrix $\Sigma$ is a real positive definite symmetric matrix. The function $\rho$ is normalized such that
\begin{equation}
	\int_{{\mathbb{R}}^{2n}}\rho(z)d^{2n}z=1
\end{equation}
so that it can be viewed as a probability density. As briefly discussed in the introduction, $\rho$ is the Wigner distribution of a density operator $\widehat{\rho}$ if and only if $\Sigma$ satisfies the condition\footnote{Observe that $\Sigma+\frac{i\hbar}{2}J$ is a self-adjoint complex matrix since $J^{T}=-J=J^{-1}$; it follows that its eigenvalues are real, and the condition above is equivalent to saying that these eigenvalues are all non-negative.}
\begin{equation}
	\Sigma+\frac{i\hbar}{2}J\geq0.
\end{equation}

\subsection{The Cross-Wigner and Cross-Ambiguity Transform of Pairs of Gaussians}

Let $\varphi^\hbar_{M}$ be the centered Gaussian
\begin{equation}\label{eq_gauss}
	\varphi^\hbar_{M}(x) = \left(  \tfrac{1}{\pi\hbar}\right)  ^{n/4}(\det X)^{1/4} e^{-\tfrac{1}{2\hbar}Mx^{2}},
\end{equation}
where $M=X+iY$ with positive definite, symmetric $X$ and symmetric $Y$. The coefficient in front of the exponential is chosen such that $\varphi^\hbar_M$ is $L^2$-normalized. If $M = I$, we simply write $\varphi^{\hbar}$ instead of $\varphi^\hbar_I$.

In \cite{Birkbis} one of us has shown the following result.

\begin{proposition}\label{pro_cross_Wigner}
	Let $(\varphi^\hbar_M,\varphi^\hbar_{M'})$ be a pair of Gaussians of the type \eqref{eq_gauss}. Their cross-Wigner transform is given by
	\begin{equation}
		W(\varphi^\hbar_M,\varphi^\hbar_{M'})(z)=\left(  \tfrac{1}{\pi\hbar}\right)^{n}C_{M,M'}e^{-\frac{1}{\hbar}Fz^{2}}\label{wififi},
	\end{equation}
	where $C_{M,M'}$ is the complex constant
	\begin{equation}
		C_{M,M'}=(\det XX')^{1/4}\det\left[  \tfrac{1}{2}(M+\overline{M'})\right] ^{-1/2}\label{cmm}
	\end{equation}
	and $F$ is the symmetric complex matrix
	\begin{equation}\label{wigf}
		F=
		\begin{pmatrix}
			2\overline{M'}(M+\overline{M'})^{-1}M & -i(M-\overline{M'})(M+\overline{M'})^{-1}\\
			-i(M+\overline{M'})^{-1}(M-\overline{M'}) & 2(M+\overline{M'})^{-1}
		\end{pmatrix}.
	\end{equation}
	If $M=M'=X+iY$ with positive definite, symmetric $X$ and symmetric $Y$, we write $G = F$ and recover the well-known formula
	\begin{align}\label{eq_auto_Wigner_Gaussian}
		W \varphi^\hbar_M (z) = \bigl( \tfrac{1}{\pi \hbar} \bigr)^n \hspace{2pt} e^{-\frac{1}{\hbar} G z^2}.
	\end{align}
	The matrix $G$ is then a real Gram matrix which can be factored as
	\begin{equation}\label{eq_splitting_Gram}
		G = S^T S,
	\end{equation}
	where the symplectic matrix $S$ is given by
	\begin{align}\label{eq_splitting_Gram_S}
		S =
		\begin{pmatrix}
		X^{1/2} & 0\\
		X^{-1/2} Y & X^{-1/2}
		\end{pmatrix}.
	\end{align}
\end{proposition}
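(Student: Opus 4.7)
My plan is to evaluate the defining integral \eqref{w} for the cross-Wigner transform directly. After substituting $\psi=\varphi^\hbar_M$ and $\phi=\varphi^\hbar_{M'}$ and using the complex symmetry of $M,M'$ to expand
\begin{equation*}
M(x+\tfrac{1}{2}y)^2 + \overline{M'}(x-\tfrac{1}{2}y)^2 = (M+\overline{M'})\,x^2 + x^T(M-\overline{M'})y + \tfrac{1}{4}(M+\overline{M'})\,y^2,
\end{equation*}
the exponent of the integrand becomes a quadratic form in $(x,y,p)$. Writing $N := M+\overline{M'}$ and $D := M-\overline{M'}$, the $y$-quadratic part has coefficient $-\tfrac{1}{8\hbar}N$ and, together with the $-\tfrac{i}{\hbar}py$ factor, the $y$-linear part is $-\tfrac{1}{2\hbar}\,y^T(Dx+2ip)$. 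Since $\mathrm{Re}(N) = X+X'$ is positive definite, completing the square in $y$ and applying the standard complex Gaussian formula produces the factor $(8\pi\hbar)^{n/2}(\det N)^{-1/2}$ together with the residual exponent
\begin{equation*}
-\tfrac{1}{2\hbar}(N - DN^{-1}D)\,x^2 \;+\; \tfrac{2i}{\hbar}\,x^T D N^{-1} p \;-\; \tfrac{2}{\hbar}\,p^T N^{-1}p.
\end{equation*}

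The central algebraic step is to identify this with $-\tfrac{1}{\hbar}Fz^2$ for the $F$ in \eqref{wigf}. The $p^2$ block immediately gives $F_{22}=2N^{-1}$, and symmetrising the cross-term yields $F_{12}=-iDN^{-1}$, together with $F_{21}=F_{12}^T=-iN^{-1}D$ (using that $D$ and $N$ are complex-symmetric). For the $x^2$ block, using $M=\tfrac{1}{2}(N+D)$ and $\overline{M'}=\tfrac{1}{2}(N-D)$ I would verify the identity
\begin{equation*}
2\overline{M'}N^{-1}M \;=\; \tfrac{1}{2}(N-D)N^{-1}(N+D) \;=\; \tfrac{1}{2}\bigl[(N-D) + (N-D)N^{-1}D\bigr] \;=\; \tfrac{1}{2}(N - DN^{-1}D),
\end{equation*}
which yields $F_{11}=2\overline{M'}N^{-1}M$. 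The prefactor then collapses via $(\tfrac{1}{2\pi\hbar})^n \cdot (\tfrac{1}{\pi\hbar})^{n/2}(\det XX')^{1/4} \cdot (8\pi\hbar)^{n/2}(\det N)^{-1/2} = (\tfrac{1}{\pi\hbar})^n \cdot C_{M,M'}$, once one observes $\det(\tfrac{1}{2}N)^{-1/2} = 2^{n/2}(\det N)^{-1/2}$.

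For the special case $M=M'$, one has $D=0$ and $N=2X$, so $F_{22}=X^{-1}$, $F_{12}=YX^{-1}$, and
\begin{equation*}
F_{11} \;=\; \overline{M}\,X^{-1}M \;=\; (X-iY)X^{-1}(X+iY) \;=\; X + YX^{-1}Y,
\end{equation*}
the imaginary parts cancelling by the symmetry of $Y$. Since $C_{M,M}=1$, this recovers \eqref{eq_auto_Wigner_Gaussian} with a real Gram matrix $G$. The factorisation $G=S^T S$ I would verify by direct block multiplication, using $(X^{\pm 1/2})^T = X^{\pm 1/2}$ to obtain $G_{11}=X+YX^{-1}Y$, $G_{12}=YX^{-1}$, $G_{22}=X^{-1}$. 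To see that $S \in Sp(n)$, I would compute $S^T J S$ block by block; the $(1,1)$-entry vanishes by the cancellation $X^{1/2}\cdot X^{-1/2}Y - YX^{-1/2}\cdot X^{1/2}=0$, while the $(1,2)$- and $(2,1)$-entries equal $\pm I$ from $X^{1/2}\cdot X^{-1/2}=I$. The only genuine obstacle in the whole proof is bookkeeping: keeping the complex-symmetric (as opposed to Hermitian) structure of $M,\overline{M'}$ straight throughout, and pinning down the correct branch of the square root in $\det(\tfrac{1}{2}N)^{-1/2}$ so that the Wigner transform comes out real-valued in the diagonal case.
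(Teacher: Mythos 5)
The paper does not prove this proposition at all: it simply states that the result was established in the reference [Birkbis] (de Gosson, \emph{Symplectic Methods in Harmonic Analysis and in Mathematical Physics}), so your direct computation is a genuinely self-contained alternative rather than a variant of an argument given in the text. Your route --- expanding the exponent of the integrand of \eqref{w}, completing the square in $y$ for the complex symmetric matrix $N=M+\overline{M'}$ (legitimate since $\mathrm{Re}\,N=X+X'>0$), and then matching blocks --- is correct: the identity $2\overline{M'}N^{-1}M=\tfrac12(N-DN^{-1}D)$, the block identifications $F_{22}=2N^{-1}$, $F_{12}=-iDN^{-1}$, the prefactor bookkeeping $(\tfrac{1}{2\pi\hbar})^n(8\pi\hbar)^{n/2}(\tfrac{1}{\pi\hbar})^{n/2}=(\tfrac{1}{\pi\hbar})^n2^{n/2}$ matching $\det(\tfrac12 N)^{-1/2}=2^{n/2}(\det N)^{-1/2}$, and the verifications of $G=S^TS$ and $S^TJS=J$ all check out. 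One slip to fix: in the diagonal case $M=M'$ you write ``$D=0$'', but with your definition $D=M-\overline{M'}$ one has $D=M-\overline{M}=2iY$, which is nonzero unless $Y=0$; your subsequent formulas $F_{12}=-i(2iY)(2X)^{-1}=YX^{-1}$ and $F_{11}=\overline{M}X^{-1}M=X+YX^{-1}Y$ show you actually used $D=2iY$, so the conclusion is unaffected, but the sentence as written is internally inconsistent and should read $D=2iY$, $N=2X$.
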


\subsection{Gaussian Phase Space Frame Expansions}

In the following we will give a quantitative estimate for the approximation error \eqref{eq_approx_rate} for arbitrary symplectic lattices in $\mathbb{R}^n$ as the density tends to infinity. Recall that for the symplectic lattice $\L = \delta^{-1} S \Z^{2n}$, $S \in Sp(n)$, of density $\delta^{2n} = \vol(\L)^{-1}$, the adjoint lattice is simply given by $\L^\circ = \delta S \Z^{2n} = \vol(\L)^{-1/n} \L$. Given the Weyl--Heisenberg frame $\G(\varphi^\hbar,\L)$, Janssen's representation \cite{Janssen_Duality_1994} of the associated frame operator is given by
\begin{equation} \label{Janssen}
	\hat{A}_\mathcal{G}
	= \frac{(2 \pi \hbar)^n}{\vol(\L)} \hspace{2pt} \sum_{z_{\l^\circ} \in \L^\circ} A\varphi^\hbar(z_{\l^\circ}) \hspace{2pt} \widehat{T}(z_{\l^\circ})
	= \frac{(2 \pi \hbar)^n}{\vol(\L)} \hspace{2pt} \sum_{k, l \in \Z^n} A\varphi^\hbar \bigl(S (\delta k, \delta l) \bigr).
\end{equation}
The value of the ambiguity function evaluated at a point on the adjoint lattice is given by
\begin{align}
	A\varphi^\hbar \bigl( \delta S z_\l \bigr) = \left(2 \pi \hbar \right)^{-n} \hspace{2pt} e^{-\frac{\delta^2}{4 \hbar} G z_\l^2}, \quad z_\l \in \L,
\end{align}
where $G = S^T S$ is positive definite with
\begin{align}
	S =
	\begin{pmatrix}
		L & 0\\
		0 & L^{-T}
	\end{pmatrix}
		\begin{pmatrix}
		1 & 0\\
		P & 1
	\end{pmatrix}
	=
	\begin{pmatrix}
		L & 0\\
		L^{-T} P & L^{-T}
	\end{pmatrix} \in Sp(n),
\end{align}
for some invertible matrix $L$ and real, symmetric matrix $P = P^T$. For our purpose, this particular splitting is no restriction (recall \eqref{eq_auto_Wigner_Gaussian}, \eqref{eq_splitting_Gram}, \eqref{eq_splitting_Gram_S} as well as \cite[Prop.4.76.]{Folland}); this is due to the fact that for $S' = Q S$ with $Q \in SO(2n,\R) \cap Sp(n)$ we have $(S')^T S' = S^T S$. We obtain the following generalization of \cite[Prop.~3.1.]{Faulhuber_Hexagon_2017}, where the following result was proven for $n=1$.

\begin{proposition} \label{pro_error_estimate_WH}
	Let $\G(\varphi^\hbar, \L)$ be the Weyl--Heisenberg frame with $\L = \delta^{-1} Q S \L$, with $\delta > 0$, $Q, S$ and $\varphi^\hbar$ as above. Then for the Weyl--Heisenberg frame operator $\widehat{A}_\G$ we have
	\begin{align}
		|| Id - \delta^{-2n} \hspace{2pt} \hat{A}_\mathcal{G} ||_{op} = \mathcal{O}\Bigl( e^{-\frac{\delta^2}{4 \hbar} ( ||L||^2_{\R^{n \times n}} + ||L||^{-2}_{\R^{n \times n}})} \Bigr).
	\end{align}
\end{proposition}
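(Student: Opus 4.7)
The plan is to exploit Janssen's representation \eqref{Janssen} of the frame operator and then reduce the operator-norm estimate to a scalar Gaussian lattice sum. First, I would note that the rotation $Q \in SO(2n,\R)\cap Sp(n)$ plays no role: since the Gram matrix is unchanged by $Q$, that is $(QS)^T(QS) = S^T S = G$, the ambiguity values $A\varphi^\hbar(\delta QSk) = (2\pi\hbar)^{-n}e^{-\frac{\delta^2}{4\hbar}Gk^2}$ are independent of $Q$, so it suffices to treat the case $Q = I$. Combined with $\det S = 1$ this gives $\vol(\L) = \delta^{-2n}$, which cleanly cancels the prefactor in \eqref{Janssen}.

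Substituting into \eqref{Janssen} and isolating the $k=0$ summand, which contributes exactly $\widehat{T}(0) = Id$, I would obtain
\begin{equation}
Id - \delta^{-2n}\widehat{A}_\mathcal{G} = -\sum_{k \in \Z^{2n}\setminus\{0\}} e^{-\frac{\delta^2}{4\hbar}Gk^2}\,\widehat{T}(\delta S k).
\end{equation}
Since every Heisenberg operator is unitary on $\Lt$, the triangle inequality immediately reduces the task to a purely scalar estimate:
\begin{equation}
\|Id - \delta^{-2n}\widehat{A}_\mathcal{G}\|_{op} \leq \sum_{k \in \Z^{2n}\setminus\{0\}} e^{-\frac{\delta^2}{4\hbar}Gk^2}.
\end{equation}

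It remains to extract the claimed asymptotic. Writing $k = (k_1,k_2) \in \Z^n \times \Z^n$ and using the splitting $Gk^2 = |Lk_1|^2 + |L^{-T}(Pk_1 + k_2)|^2$, I would partition the sum according to whether $k_1 = 0$. The slice $k_1 = 0$, $k_2 \neq 0$ is a pure Gaussian theta series in $L^{-T}\Z^n$ whose large-$\delta$ behaviour is governed by the smallest value of $|L^{-T}k_2|^2$, producing exponential decay of order $e^{-\frac{\delta^2}{4\hbar}\|L\|^{-2}}$. For $k_1 \neq 0$, I would first perform the $k_2$-sum: translation invariance of $\Z^n$ together with a Poisson-summation comparison lets me bound $\sum_{k_2 \in \Z^n} e^{-\frac{\delta^2}{4\hbar}|L^{-T}(Pk_1+k_2)|^2}$ uniformly in $k_1$ by a fixed $Pk_1$-independent theta value; the outer $k_1$-sum is then a theta series in $L\Z^n$ contributing decay of order $e^{-\frac{\delta^2}{4\hbar}\|L\|^{2}}$. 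Adding the two regimes and consolidating into a single exponential yields the rate stated in \eqref{eq_convergence_thm}.

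The main technical obstacle is precisely the last step: the shear term $L^{-T}Pk_1$ entangles the two inner sums, so a naive tensorization fails. Overcoming this through a uniform-in-$k_1$ Poisson-type bound on the shifted Gaussian sum is the heart of the argument; once that uniform bound is in hand, everything else is bookkeeping about which eigenvalues of $S^T S$ dominate in the large-$\delta$ asymptotic regime.
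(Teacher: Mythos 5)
Your argument is essentially the paper's own proof: Janssen's representation, reduction to the scalar Gaussian lattice sum $\sum_{k\neq 0}e^{-\frac{\delta^2}{4\hbar}Gk^2}$ via unitarity of the Heisenberg operators, and then the Poisson-summation comparison that dominates the $P$-sheared Gaussian sum over $k_2$ by the unsheared one --- which is precisely the paper's double application of Poisson summation with the intermediate phase $e^{\frac{i}{\hbar}kPm}$ (resp.\ the cosine) bounded by $1$. Your split into the slices $k_1=0$ and $k_1\neq 0$ is purely organizational, and your closing step of ``adding the two regimes and consolidating into a single exponential'' is the same passage from a sum of the two exponentials to the product rate $e^{-\frac{\delta^2}{4\hbar}(\lVert L\rVert^2+\lVert L\rVert^{-2})}$ that the paper itself makes in going from its final display to the stated $\mathcal{O}$-bound, so the two proofs coincide in substance.
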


\begin{proof}
	Let $\delta > 0$. Due to \eqref{Janssen}, we find
	\begin{align}
		\vol(\L) \hspace{2pt }\hat{A}_\mathcal{G} &= \sum_{k, l \in \Z^n} e^{-\frac{\delta^2}{4 \hbar} |S(k, l)|^2} \\
		&= \sum_{k, l \in \Z^n} e^{-\frac{\delta^2}{4 \hbar} L k^2} \hspace{2pt} e^{-\frac{\delta^2}{4 \hbar} |L^{-T}(Pk+ l)|^2} \\
		& \leq \sum_{k, l \in \Z^n} e^{-\frac{\delta^2}{4 \hbar} ||L||^2_{\R^{n \times n}} k^2} \hspace{2pt} e^{-\frac{\delta^2}{4 \hbar}||L||^2_{\R^{n \times n}} |Pk + l|^2} \\
		&= \sum_{k, m \in \Z^n} e^{-\frac{ \delta^2}{4 \hbar} ||L||^2_{\R^{n \times n}} k^2} \hspace{2pt} \frac{||L||^2_{\R^{n \times n}}}{\sqrt{2} \delta} \hspace{2pt} e^{-\frac{1}{\delta^2 \hbar} ||L||^2_{\R^{n \times n}} m^2} \hspace{2pt} e^{\frac{i}{\hbar} k P m}, \label{PoissonTerm}
	\end{align}
	where the last step is due to Poisson summation. As our summation goes over all $m \in \mathbb{Z}^n$, \eqref{PoissonTerm} equals
	\begin{align}
		\sum_{k, m \in \Z^n} e^{-\frac{ \delta^2}{4 \hbar} ||L||^2_{\R^{n \times n}} k^2} \hspace{2pt} \frac{||L||^2_{\R^{n \times n}}}{\sqrt{2} \delta} \hspace{2pt} e^{-\frac{1}{\delta^2 \hbar} ||L||^2_{\R^{n \times n}} m^2} \hspace{2pt} \cos\bigl( \tfrac{1}{\hbar} k P m \bigr).
	\end{align}
	Finally, since $\cos(2x) = 1 - 2 \sin(x)^2$, we obtain
	\begin{align}
		\vol(\L) \hspace{2pt }\hat{A}_\mathcal{G} &\leq \sum_{k, m \in \Z^n} e^{-\frac{\delta^2}{4 \hbar} ||L||^2_{\R^{n \times n}} k^2} \hspace{2pt} \frac{||L||^2_{\R^{n \times n}}}{\sqrt{2}  \delta} \hspace{2pt} e^{-\frac{1}{\delta^2 \hbar} ||L||^2_{\R^{n \times n}} m^2} \\
		&= \sum_{k, l \in \Z^n} e^{-\frac{\delta^2}{4 \hbar} \bigl( ||L||^2_{\R^{n \times n}}  k^2 + ||L||^{-2}_{\R^{n \times n}} l^2 \bigr)},
	\end{align}
	where we have applied Poisson summation once more.
\end{proof}

\begin{definition}\label{def_phase_space_Gaussian}
	For $\varphi^\hbar_M, \varphi^\hbar_{M'}$ as in \eqref{eq_gauss} we set
	\begin{align}
		\Phi^\hbar_F(z) = (2\pi\hbar)^{n/2} \hspace{2pt} W (\varphi^\hbar_M, \varphi^\hbar_{M'}) (z) = \bigl( \tfrac{2}{\pi \hbar} \bigr)^{n/2} C_{M,M'} \hspace{2pt} e^{-\frac{1}{\hbar} F z^2},
	\end{align}
	where $C_{M,M'}$ and $F$ are as in Proposition \ref{pro_cross_Wigner}. If $F = I$ we simply write $\Phi^\hbar$ instead of $\Phi^\hbar_I$.
\end{definition}

\begin{corollary}\label{cor_error_estimate_phase_space}
	Let $\L = \delta^{-1} Q S \Z^{2n}$. For the associated phase space frame $\G(\Phi^\hbar,\L)$ for $\mathcal{H}_{\varphi^\hbar} \subset \Lt[2n]$ associated to the Weyl--Heisenberg frame $\G(\varphi^\hbar, \L)$ we have the stable approximation
	\begin{align}
		|| Id - \delta^{-2n} \hspace{2pt} \widetilde{A}_\mathcal{G} ||_{op} = \mathcal{O}\Bigl( e^{-\frac{\delta^2}{4 \hbar} ( ||L||^2_{\R^{n \times n}} + ||L||^{-2}_{\R^{n \times n}})} \Bigr).
	\end{align}
\end{corollary}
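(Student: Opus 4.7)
The plan is to transfer the operator-norm estimate from Proposition~\ref{pro_error_estimate_WH} to the phase space setting via the isometry $U_{\varphi^\hbar}$ and the intertwining relation~\eqref{affi}. Concretely, I would first observe that $\mathcal{H}_{\varphi^\hbar}$ is exactly the range of $U_{\varphi^\hbar}$, which by Moyal's identity is an isometry from $L^2(\R^n)$ onto $\mathcal{H}_{\varphi^\hbar}$, so that $|||U_{\varphi^\hbar}\psi||| = \|\psi\|$ for every $\psi \in L^2(\R^n)$. The identity on $\mathcal{H}_{\varphi^\hbar}$ can therefore be written as $U_{\varphi^\hbar} U_{\varphi^\hbar}^\ast|_{\mathcal{H}_{\varphi^\hbar}}$, and, since $U_{\varphi^\hbar}^\ast U_{\varphi^\hbar}$ is the identity on $L^2(\R^n)$, we have $\mathrm{Id}_{\mathcal{H}_{\varphi^\hbar}} U_{\varphi^\hbar} = U_{\varphi^\hbar} \mathrm{Id}_{L^2(\R^n)}$.

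Next, I would apply the intertwining relation \eqref{affi} in the form $\widetilde{A}_\G U_{\varphi^\hbar} = U_{\varphi^\hbar} \widehat{A}_\G$ on $L^2(\R^n)$. Combining this with the previous step gives
\begin{equation}
(\mathrm{Id} - \delta^{-2n}\widetilde{A}_\G)\,U_{\varphi^\hbar} \;=\; U_{\varphi^\hbar}\,(\mathrm{Id} - \delta^{-2n}\widehat{A}_\G)
\end{equation}
as operators from $L^2(\R^n)$ into $\mathcal{H}_{\varphi^\hbar}$. Taking norms and using that $U_{\varphi^\hbar}$ preserves norms on both sides, one obtains for every $\psi \in L^2(\R^n)$
\begin{equation}
|||(\mathrm{Id} - \delta^{-2n}\widetilde{A}_\G)\,U_{\varphi^\hbar}\psi||| \;=\; \|(\mathrm{Id} - \delta^{-2n}\widehat{A}_\G)\psi\|.
\end{equation}
Since $U_{\varphi^\hbar}$ maps the unit ball of $L^2(\R^n)$ bijectively onto the unit ball of $\mathcal{H}_{\varphi^\hbar}$, taking suprema on both sides yields the equality of operator norms
\begin{equation}
\|\mathrm{Id} - \delta^{-2n}\widetilde{A}_\G\|_{op,\,\mathcal{H}_{\varphi^\hbar}} \;=\; \|\mathrm{Id} - \delta^{-2n}\widehat{A}_\G\|_{op,\,L^2(\R^n)}.
\end{equation}

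Finally, the right-hand side is estimated by Proposition~\ref{pro_error_estimate_WH}, which gives the claimed $\mathcal{O}\bigl(e^{-\frac{\delta^2}{4\hbar}(\|L\|^2_{\R^{n\times n}} + \|L\|^{-2}_{\R^{n\times n}})}\bigr)$ decay. The only subtle point to verify along the way is that the intertwining~\eqref{affi} extends from the dense domain of Wigner transforms coming from Schwartz functions to all of $\mathcal{H}_{\varphi^\hbar}$; this is routine because both sides define bounded operators and $U_{\varphi^\hbar}$ is an isometry. There is no genuine obstacle: the corollary is essentially a functorial consequence of Proposition~\ref{pro_error_estimate_WH} under the unitary identification of $L^2(\R^n)$ with $\mathcal{H}_{\varphi^\hbar}$ supplied by $U_{\varphi^\hbar}$, and the fact that this identification intertwines $\widehat{A}_\G$ with $\widetilde{A}_\G$ was already established in the previous subsection.
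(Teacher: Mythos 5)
Your proposal is correct and follows essentially the same route as the paper, which likewise deduces the corollary from Proposition~\ref{pro_error_estimate_WH} by applying the isometry $U_{\varphi^\hbar}$ together with the intertwining relation $\widetilde{A}_\G U_{\varphi^\hbar} = U_{\varphi^\hbar}\widehat{A}_\G$; you merely spell out the operator-norm transfer that the paper leaves implicit.
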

\begin{proof}
	Since the according property of the Weyl--Heisenberg frame $\G(\Phi^\hbar,\L)$ holds due to Proposition~\ref{pro_error_estimate_WH}, the statement follows directly 	from an application of the isometry $U^\hbar_\varphi$ as $\Phi^\hbar = U_\varphi^\hbar \varphi^\hbar$.
\end{proof}

This proves Theorem~\ref{thm_main}.

\begin{proposition}\label{pro_coeff}
	Let $\Lambda \subset \mathbb{R}^{2n}$ be a discrete index set such that $\G(\Phi^\hbar,\L)$ is a phase space frame for $\mathcal{H}_{\varphi^\hbar}$. Then the coefficients of the expansion 
	\begin{equation}
		\widetilde{A}_\G \Phi_F^\hbar = \sum_{z_\l \in \L} c_{z_\l} \widetilde{T}(z_\l) \Phi^\hbar
	\end{equation}
	are given by
	\begin{align}
		c_{z_\l} = ((\Phi^\hbar_F, \widetilde{T}(z_\l) \Phi^\hbar))
		= \bigl( \tfrac{2}{\pi \hbar} \bigr)^{n/2} \hspace{2pt} \det(F+I)^{-1/2} \hspace{2pt} e^{-\frac{1}{4 \hbar} z_\l^2}
		\hspace{2pt} e^{-\tfrac{1}{4 \hbar} (F+I)^{-1} \bigl( (J -iI) z_\l \bigr)^2}.
	\end{align}
	Equivalently, by setting $H = (J -iI)^T (F+I)^{-1} \hspace{2pt} (J -iI)$, we have
	\begin{align}
		((\Phi^\hbar_F, \widetilde{T}(z_\l) \Phi^\hbar))
		= \bigl( \tfrac{2}{\pi \hbar} \bigr)^{n/2} \hspace{2pt} \hspace{2pt} \det(F+I)^{-1/2} \hspace{2pt} e^{-\tfrac{1}{4 \hbar} (I+H) z_\l^2}.
	\end{align}
\end{proposition}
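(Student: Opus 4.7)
The plan is to evaluate the phase-space inner product $((\Phi^\hbar_F \mid \widetilde{T}(z_\l)\Phi^\hbar))$ directly as a $2n$-dimensional complex Gaussian integral. By Definition~\ref{def_phase_space_Gaussian} (together with the special case $F=I$, $C_{I,I}=1$ for $\Phi^\hbar$), the action \eqref{tildezo1} of $\widetilde{T}(z_\l)$, and the reality of $\Phi^\hbar$, the integrand factors as a scalar prefactor times
\[
e^{-\tfrac{1}{\hbar} F z^2}\, e^{\tfrac{i}{\hbar}\sigma(z,z_\l)}\, e^{-\tfrac{1}{\hbar}\bigl(z-\tfrac{1}{2} z_\l\bigr)^2}.
\]
Expanding the last square and rewriting $\sigma(z,z_\l) = z_\l^T J z = -(J z_\l)^T z$, the exponent groups cleanly into the quadratic piece $-\tfrac{1}{\hbar}(F+I) z^2$, the linear piece $\tfrac{1}{\hbar}\bigl((I - iJ) z_\l\bigr)^T z$, and a $z$-independent remainder $-\tfrac{1}{4\hbar} z_\l^2$ that pulls out of the integral and already accounts for one of the factors in the stated formula.

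Next I would apply the complex Gaussian integral
\[
\int_{\R^{2n}} e^{-z^T A z + b^T z}\, d^{2n}z = \pi^{n}(\det A)^{-1/2}\, e^{\frac{1}{4} b^T A^{-1} b},
\]
valid for complex symmetric $A$ with positive definite real part. With $A = \tfrac{1}{\hbar}(F+I)$ (positivity of $\mathrm{Re}(F+I)$ being inherited from the assumptions of Proposition~\ref{pro_cross_Wigner}) and $b = \tfrac{1}{\hbar}(I - iJ) z_\l$, this step delivers the factor $\det(F+I)^{-1/2}$, absorbs the remaining powers of $\pi\hbar$ from the prefactors of $\Phi^\hbar_F$ and $\Phi^\hbar$ into the stated $(\tfrac{2}{\pi\hbar})^{n/2}$, and leaves the exponential $e^{+\tfrac{1}{4\hbar}\,((I-iJ) z_\l)^T (F+I)^{-1}((I-iJ) z_\l)}$.

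The remaining task is purely algebraic: I would rewrite this last exponential as $e^{-\tfrac{1}{4\hbar}(F+I)^{-1}((J-iI) z_\l)^2}$. The starting identity is $I-iJ = -i(J+iI)$, which inserts a factor $(-i)^2 = -1$ and converts the form to $-\,((J+iI) z_\l)^T (F+I)^{-1}((J+iI) z_\l)$. Next, combining $J^T = -J$ with the symmetry of $(F+I)^{-1}$ and the scalar identity $z_\l^T B z_\l = z_\l^T B^T z_\l$ yields
\[
((J+iI) z_\l)^T (F+I)^{-1}((J+iI) z_\l) = ((J-iI) z_\l)^T (F+I)^{-1}((J-iI) z_\l),
\]
so both the sign and the argument come out as stated. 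The equivalent reformulation using $H = (J-iI)^T(F+I)^{-1}(J-iI)$ is then a one-line merger $e^{-\tfrac{1}{4\hbar} z_\l^2}\, e^{-\tfrac{1}{4\hbar}\,z_\l^T H z_\l} = e^{-\tfrac{1}{4\hbar}(I+H) z_\l^2}$.

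The main obstacle I expect is this last reconciliation: the Gaussian integration naturally produces a quadratic form in the ``wrong'' linear combination $(I-iJ)z_\l$ with a \emph{positive} exponent, while the statement is written in terms of $(J-iI)z_\l$ with a \emph{negative} exponent. Both discrepancies must be resolved in a single bookkeeping pass, using $(-i)^2=-1$ to flip the sign and then exploiting $J^T=-J$ together with the symmetry of $(F+I)^{-1}$ to swap $(J+iI)$ for $(J-iI)$; getting every imaginary unit and transpose in its correct place is where a careless computation most easily goes wrong.
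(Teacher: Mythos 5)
Your strategy is the same as the paper's: insert the definitions, complete the square, and evaluate a $2n$-dimensional complex Gaussian integral. Up to and including that integration your computation is correct, and it produces the exponent
\begin{equation}
	+\tfrac{1}{4\hbar}\bigl((I-iJ)z_\l\bigr)^T(F+I)^{-1}\bigl((I-iJ)z_\l\bigr)
	\;=\;
	-\tfrac{1}{4\hbar}\bigl((J+iI)z_\l\bigr)^T(F+I)^{-1}\bigl((J+iI)z_\l\bigr).
\end{equation}
The genuine gap is the final ``reconciliation'' step. The identity you assert,
\begin{equation}
	\bigl((J+iI)z_\l\bigr)^T(F+I)^{-1}\bigl((J+iI)z_\l\bigr)
	\;=\;
	\bigl((J-iI)z_\l\bigr)^T(F+I)^{-1}\bigl((J-iI)z_\l\bigr),
\end{equation}
is false in general. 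Writing $B=(F+I)^{-1}$ and $u=Jz_\l$, the two sides equal $u^TBu-z_\l^TBz_\l\pm 2i\,u^TBz_\l$, so their difference is $4i\,(Jz_\l)^TBz_\l$; this vanishes for all $z_\l$ only if $B$ commutes with $J$, which fails for the matrices $F$ of Theorem~\ref{thm_main2} as soon as $M\neq I$. (Concretely, for $n=1$, $M=3$, $z_\l=(1,0)$ one side equals $\tfrac{1}{2}$ and the other $0$.) The facts you invoke --- $J^T=-J$, symmetry of $(F+I)^{-1}$, and $z^TBz=z^TB^Tz$ --- only show that each quadratic form is well defined; they do not interchange $J+iI$ and $J-iI$.

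Because your integral is evaluated correctly, the mismatch cannot be removed by bookkeeping: the exponent genuinely comes out with $(J+iI)z_\l$, i.e.\ with $H=(J+iI)^T(F+I)^{-1}(J+iI)$, and the discrepancy points to a sign slip in the displayed formula rather than in your calculation. (The paper's own proof makes the corresponding error when it identifies the linear phase with $(J-iI)z_\l=(p_\l,x_\l)-i(x_\l,p_\l)$, dropping the minus sign in $Jz_\l=(p_\l,-x_\l)$.) The same caution applies to the prefactor you propose to ``absorb'': carrying all constants through the $2n$-dimensional integral yields $2^{n}\det(F+I)^{-1/2}$ rather than $\bigl(\tfrac{2}{\pi\hbar}\bigr)^{n/2}\det(F+I)^{-1/2}$, as the consistency check $F=I$, $z_\l=0$ (where the inner product must equal $|||\Phi^\hbar|||^{2}=1$) confirms. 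As written, your argument therefore does not prove the stated formula; it converts a correct computation into the printed one by means of an identity that does not hold. To repair it you should either derive and state the corrected exponent and constant, or make explicit the additional hypothesis ($(F+I)^{-1}$ commuting with $J$) under which your last step would be valid.
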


\begin{proof}
	Let $z = (x,p) \in \R^{2n}$ and $z_\l =(x_\l,p_\l) \in \L$. Then we compute
	\begin{align}
		((\Phi^\hbar_F, \widetilde{T}(z_\l) \Phi^\hbar))
		&= \int_{\mathbb{R}^{2n}} \bigl( \tfrac{2}{\pi \hbar} \bigr)^{n/2} \hspace{2pt} e^{-\frac{1}{\hbar} F z^2}
			\bigl( \tfrac{2}{\pi \hbar} \bigr)^{n/2} e^{\frac{i}{\hbar}\sigma(z,z_l)} \hspace{2pt} e^{-\frac{1}{\hbar}(z-\tfrac{z_\l}{2})^2} \,d^{2n}z \\
		&= \bigl( \tfrac{2}{\pi \hbar} \bigr)^n \hspace{2pt} e^{-\frac{1}{4 \hbar}(x_\l^2 + p_\l^2)} \int_{\mathbb{R}^{2n}} e^{-\frac{1}{\hbar} (F+I) z^2}
			\hspace{2pt} e^{\frac{i}{\hbar}[x(p_\l - ix_\l) p (x_\l + ip_\l)]} \,d^n x \,d^n p \\
		&= \bigl( \tfrac{2}{\pi \hbar} \bigr)^n \hspace{2pt} e^{-\frac{1}{4 \hbar} z_\l^2} \int_{\mathbb{R}^{2n}} e^{-\frac{1}{2\hbar} 2(F+I) z^2}
			\hspace{2pt} e^{\frac{i}{\hbar}(J -i I) z_\l^2} \,d^{2n}z,
	\end{align}
	where in the last equality we have used $(J - iI)z_\l = (p_\l, x_\l) - i (x_\l, p_\l)$. The last integral can be viewed as the $2n$-dimensional Fourier transform of the Gaussian $(2\pi \hbar)^{n/2} e^{-\frac{1}{2 \hbar} 2(F+I) z^2}$ at $(iI-J)z_\l$ (see Appendix \ref{app_FT_Gauss}). We have
	\begin{equation}
		\int_{\mathbb{R}^{2n}} e^{-\frac{1}{2\hbar} 2(F+I) z^2} \hspace{2pt} e^{\frac{i}{\hbar}(J -i I) z_\l^2} \,d^{2n}z =
			\tfrac{(2 \pi \hbar)^{n/2}}{\det(2(F+I))^{1/2}} \hspace{2pt} e^{-\tfrac{1}{4\hbar} (F+I)^{-1} \left((iI-J)z_\l\right)^2}.
	\end{equation}
	Consequently,
	\begin{align}
		((\Phi^\hbar_F, \widetilde{T}(z_\l) \Phi^\hbar)) &= \bigl( \tfrac{2}{\pi \hbar} \bigr)^{n/2} \hspace{2pt} \det(F+I)^{-1/2} \hspace{2pt} e^{-\frac{1}{4 \hbar} z_\l^2}
		\hspace{2pt} e^{-\tfrac{1}{4 \hbar} (F+I)^{-1} \bigl( (J -iI) z_\l \bigr)^2}\\
		&= \bigl( \tfrac{2}{\pi \hbar} \bigr)^{n/2} \hspace{2pt} \det(F+I)^{-1/2} \hspace{2pt} e^{-\frac{1}{4 \hbar} z_\l^2}
		\hspace{2pt} e^{-\tfrac{1}{4 \hbar} (I+H)z_\l^2}.
	\end{align}
	The second identity is evident.
\end{proof}

This proves Theorem~\ref{thm_main2}.

\begin{corollary}
	If $F = I$, it is quickly checked that
	\begin{equation}
		(J-iI)^T (F+I)^{-1} (J-iI) = 0.
	\end{equation}
	In this case we get
	\begin{equation}
		((\Phi^\hbar, \widetilde{T}(z_\l) \Phi^\hbar)) = \left(\tfrac{1}{2\pi \hbar}\right)^{n/2} e^{-\tfrac{1}{4\hbar} z_\l^2} = (2\pi \hbar)^{n/2} A\varphi^\hbar(z_\l).
	\end{equation}
\end{corollary}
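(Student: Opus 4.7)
The plan is to verify the two claimed equalities in order, both of which reduce to short algebraic computations once the key identity $J^2 = -I$ (together with $J^T = -J$) is exploited.

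First I would establish the vanishing $(J-iI)^T(F+I)^{-1}(J-iI) = 0$ when $F = I$. Since $F+I = 2I$, the middle factor reduces to $\tfrac12 I$, so it suffices to show $(J-iI)^T(J-iI) = 0$. Using $J^T = -J$ and $(iI)^T = iI$, one expands
\begin{equation}
(J-iI)^T(J-iI) = (-J-iI)(J-iI) = -J^2 + iJ - iJ - I = I - I = 0,
\end{equation}
where the final step uses $J^2 = -I$. This is the full content of the first half of the corollary.

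Next I would plug $F = I$ into the general coefficient formula from Proposition~\ref{pro_coeff}. The exponential factor $e^{-\tfrac{1}{4\hbar}(F+I)^{-1}((J-iI)z_\l)^2}$ becomes $1$ by the computation just done, and the determinantal prefactor becomes $\det(2I)^{-1/2} = 2^{-n}$. Combining with $(2/\pi\hbar)^{n/2} \cdot 2^{-n} = (2\pi\hbar)^{-n/2}$ yields the first equality
\begin{equation}
((\Phi^\hbar, \widetilde{T}(z_\l)\Phi^\hbar)) = (2\pi\hbar)^{-n/2}\, e^{-\tfrac{1}{4\hbar}z_\l^2}.
\end{equation}

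Finally, to identify the right-hand side with $(2\pi\hbar)^{n/2} A\varphi^\hbar(z_\l)$, I would invoke the auto-Wigner formula \eqref{eq_auto_Wigner_Gaussian} with $M=I$, giving $W\varphi^\hbar(z) = (\pi\hbar)^{-n} e^{-z^2/\hbar}$, together with the algebraic relation \eqref{rel2}. Since $\varphi^\hbar$ is even, $(\varphi^\hbar)^\vee = \varphi^\hbar$, hence
\begin{equation}
A\varphi^\hbar(z) = 2^{-n} W\varphi^\hbar(\tfrac{z}{2}) = (2\pi\hbar)^{-n}\, e^{-\tfrac{1}{4\hbar}z^2},
\end{equation}
and multiplication by $(2\pi\hbar)^{n/2}$ reproduces the previous display. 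There is no real obstacle here beyond keeping the factors of $2$ and $\pi\hbar$ straight; the entire statement is a sanity check that the general formula of Proposition~\ref{pro_coeff} specialises correctly to the standard Gaussian case, where the phase space frame coefficients reduce (up to a normalisation power of $2\pi\hbar$) to samples of the ambiguity function.
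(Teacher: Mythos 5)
Your computation is correct and follows exactly the route the paper intends (the corollary is left as a ``quickly checked'' specialisation of Proposition~\ref{pro_coeff}): the identity $(J-iI)^T(J-iI)=-J^2-I=0$, the prefactor bookkeeping $\bigl(\tfrac{2}{\pi\hbar}\bigr)^{n/2}\det(2I)^{-1/2}=(2\pi\hbar)^{-n/2}$, and the identification with $(2\pi\hbar)^{n/2}A\varphi^\hbar(z_\l)$ via \eqref{rel2} and the evenness of $\varphi^\hbar$ are all as the authors intend. No gaps.
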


%

\section{Gaussian Mixed States on \texorpdfstring{$\R$}{R}}\label{sec_Gauss_1}

We treat the 1-dimensional case separately since this is the only case where for a Gaussian window $\varphi^\hbar$ a complete characterization of the frame set is known. Also, in this case all lattices are symplectic since $Sp(1) = SL(2,\R)$. We omit all proofs since the results follow already from Section \ref{sec_Gauss_n}. Note that the blocks $L$ and $P$ are now scalars. We denote the 1-dimensional standard Gaussian by
\begin{equation}\label{eq_standard_Gauss}
	\varphi^\hbar(x) = \left(\tfrac{1}{\pi \hbar}\right)^{1/4} \hspace{2pt} e^{-\tfrac{1}{2\hbar} x^2}.
\end{equation}
In time-frequency analysis $\hbar$ is usually set to $\tfrac{1}{2\pi}$. In this case we write $\varphi(x) = \varphi^{\tfrac{1}{2\pi}}(x) = 2^{1/4} e^{-\pi x^2}$. We note that $(\varphi^\hbar | \varphi^\hbar) = \norm{\varphi^\hbar}^2 = 1$ and the (general) frame set is given by
\begin{equation}
	\mathfrak{F}(\varphi^\hbar) = \{ \L \subset \R^2 \mid \delta_*(\L) > 2 \pi \hbar \}.
\end{equation}
Its Wigner transform is given by
\begin{equation}
	W\varphi^\hbar(x,p) = \tfrac{1}{\pi \hbar} \hspace{2pt} e^{- \tfrac{1}{\hbar} \left(x^2 + p^2\right)}.
\end{equation}
and its ambiguity function by
\begin{equation}\label{eq_ambi_Gauss}
	A\varphi^\hbar(x,p) = \tfrac{1}{2 \pi \hbar} \hspace{2pt} e^{- \tfrac{1}{4 \hbar} \left(x^2 + p^2\right)}.
\end{equation}
A generalized Gaussian in $\Lt[]$ is of the form
\begin{equation}\label{eq_generalized_Gauss}
	\widehat{M}_{L} \widehat{V}_{-P} \varphi^\hbar(x) = \varphi^\hbar_{P,L}(x) = \left(\tfrac{1}{\pi \hbar}\right)^{1/4} \sqrt{L} e^{-\tfrac{1}{2\hbar} (L^2 + iLP) x^2},
\end{equation}
where $\widehat{V}_{-P}$ is the metaplectic chirp and $\widehat{M}_{L}$ is the metaplectic dilation operator (where we ignore the Maslov index and set it to $m=0$) given by
\begin{equation}
	\widehat{V}_{-P} \psi(x) = e^{\tfrac{i}{2 \hbar} P x^2} \psi(x) \qquad \text{and} \qquad \widehat{M}_L \psi(x) = \sqrt{L} \hspace{2pt} \psi(L x), \quad L>0.
\end{equation}
For certain densities of the lattice, it is possible to explicitly compute Weyl--Heisenberg expansions of a function and the canonical dual window of $\varphi^\hbar$ using results from Baastians \cite{Baastians_1980} and Janssen \cite{Jan96}.
\begin{figure}[hp]
	\begin{center}
		\subfigure[The Gaussian $\varphi_{0,2}$.]{\includegraphics[width=.45\textwidth]{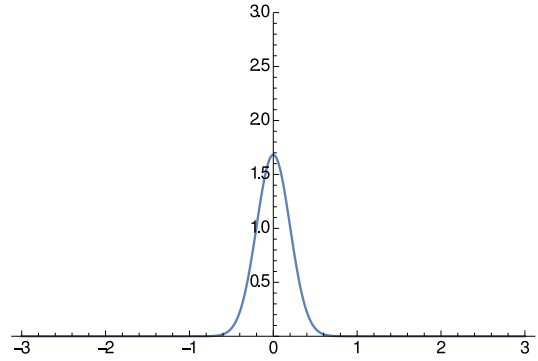}}
	\end{center}
	\subfigure[Approximation of $\varphi_{0,2}$ by using formula \eqref{eq_approx_expansion} for the integer lattice $\Z \times \Z$.]{\includegraphics[width=.45\textwidth]{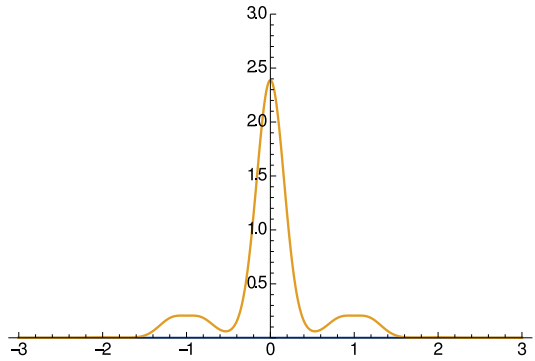}}
	\hfill
	\subfigure[Approximation of $\varphi_{0,2}$ by using formula \eqref{eq_approx_expansion} for the square lattice $\tfrac{1}{2}\Z \times \tfrac{1}{2}\Z$.]{\includegraphics[width=.45\textwidth]{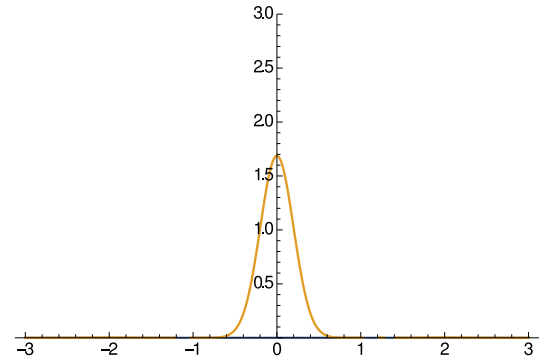}}
	\newline
	\subfigure[The pointwise difference (in absolute values) between the approximated Gaussian and the original Gaussian.]{\includegraphics[width=.45\textwidth]{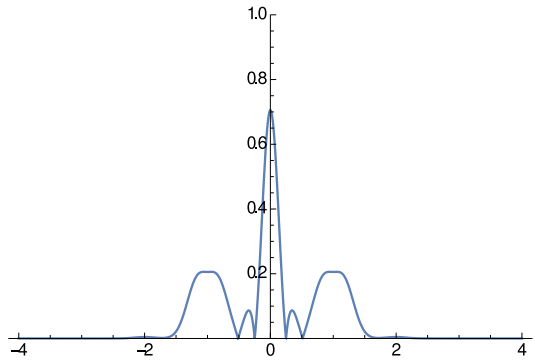}}
	\hfill
	\subfigure[The pointwise difference (in absolute values) between the approximated Gaussian and the original Gaussian.]{\includegraphics[width=.45\textwidth]{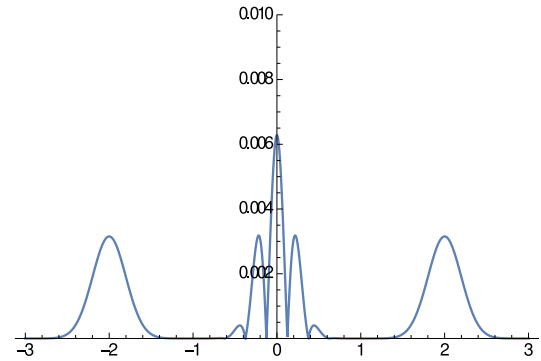}}
	\footnotesize{\caption{Comparison of a dilated Gaussian and its approximation by standard Gaussians using a truncated expansion of type \eqref{eq_approx_expansion} for square lattices at critical density $\delta = 1$ (left) and density $\delta = 4$ (right). For the plots we set $\hbar = \tfrac{1}{2 \pi}$. Note the different scales!}\label{fig_approx}}
\end{figure}

However, this is a quite cumbersome task. Since we know that the canonical dual window resembles (up to a factor determined by the density) the original window if the density is high enough, we might use the simpler approximation \eqref{eq_approx_expansion} instead of the Weyl--Heisenberg expansion \eqref{eq_frame_dual} (see Figure \ref{fig_approx} for an example). For a Gaussian window, this approximation becomes accurate quite quickly. In fact, we have the following property for the frame operator $\widehat{A}_\G$ associated to the Weyl--Heisenberg frame $\G(\varphi^\hbar, \L)$, where
\begin{equation}
	\L = \delta^{-1}
	\begin{pmatrix}
		L & 0\\
		0 & L^{-1}
	\end{pmatrix}
	\begin{pmatrix}
		1 & 0\\
		-P & 1
	\end{pmatrix} \Z^2.
\end{equation}
For our analysis we can focus on lattices generated by lower triangular matrices since any matrix can be decomposed into an orthogonal matrix and a lower triangular matrix (QR-decomposition). The orthogonal matrix can be ignored since $\varphi^\hbar$ is an eigenfunction with eigenvalue of modulus 1 of  the corresponding metaplectic operator (see, e.g., \cite{Faulhuber_Invariance_2016, ACHA}). In short, a rotation of the lattice does not affect our results since the Wigner transform and the ambiguity function of the standard Gaussian are invariant under rotations. Computing the ambiguity function of a generalized Gaussian yields
\begin{equation}
	A \varphi^\hbar_{P,L}(x,p) = \tfrac{1}{2 \pi \hbar} \hspace{2pt} e^{- \tfrac{1}{4 \hbar} \left(\left(L^2 + \tfrac{P^2}{L^2}\right) x^2 + 2 \tfrac{P}{L^2} \, xp + \tfrac{1}{L^2} p^2\right)}.
\end{equation}

In the 1-dimensional case, it is customary to parametrize a lattice in phase space by the triple $(\alpha, \beta, \gamma)$ in the following way
\begin{equation}
	\L =
	\begin{pmatrix}
		\alpha & 0\\
		\beta \gamma & \beta
	\end{pmatrix}
	\Z^2 =
	\begin{pmatrix}
		\alpha & 0\\
		0 & \beta
	\end{pmatrix}	
	\begin{pmatrix}
		1 & 0\\
		\gamma & 1
	\end{pmatrix}
	\Z^2 \, .
\end{equation}
Sometimes, the shearing and the dilation matrix are interchanged in the literature, however, this has no effect on our results. We also note that all 2-dimensional lattices can be expressed in the above form, up to a rotation which is negligible for the analysis of Gaussian states \cite{Faulhuber_Invariance_2016, ACHA}.

The following proposition is essentially \cite[Prop.~3.1.]{Faulhuber_Hexagon_2017}.
\begin{proposition}\label{pro_theta_estimate}
	For $r > 0$ and $\alpha, \, \beta, \, \gamma \in \R$ with $\alpha \beta = 1$ we have
	\begin{equation}
		\sum_{k,l \in \Z} e^{-r \left(\alpha^2 k^2 + 2 \alpha \beta \gamma \, k l + \beta^2 (1+\gamma)^2 l^2\right)}
		\leq \sum_{k,l \in \Z} e^{-r \left(\tfrac{1}{\alpha^2} k^2 + \tfrac{1}{\beta^2} l^2\right)}.
	\end{equation}
\end{proposition}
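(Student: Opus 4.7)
The plan is to separate variables by completing the square and then bound the resulting shifted Jacobi theta function by the unshifted one via Poisson summation. First I would use $\alpha\beta = 1$ to rewrite the cross coefficient $2\alpha\beta\gamma$ as $2\gamma$ and to identify $1/\alpha^2$ with $\beta^2$ (and vice versa); this converts the right-hand side into the product $\bigl(\sum_k e^{-r\beta^2 k^2}\bigr)\bigl(\sum_l e^{-r\alpha^2 l^2}\bigr)$ of two one-dimensional theta functions, which is the form the LHS must be brought to.

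For each fixed $l$, I would complete the square in $k$ in the exponent on the left-hand side. The central algebraic observation, and the main technical ingredient, is that after the completion the coefficient of $l^2$ collapses to exactly $\beta^2$ thanks to $\alpha\beta = 1$; explicitly, the exponent rewrites as $\alpha^2 (k + \gamma\beta^2 l)^2 + \beta^2 l^2$, which precisely matches the two factors of the target product. Granted this, the LHS factors as
\[
\sum_{l \in \Z} e^{-r\beta^2 l^2} \sum_{k \in \Z} e^{-r\alpha^2 (k + \gamma\beta^2 l)^2}.
\]

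It then remains to bound, uniformly in $l$, the inner shifted theta function by the unshifted one. This is a standard Poisson summation argument: one writes
\[
\sum_{k \in \Z} e^{-r\alpha^2 (k+x)^2} = \tfrac{1}{\alpha}\sqrt{\tfrac{\pi}{r}} \sum_{m \in \Z} e^{-\pi^2 m^2/(r\alpha^2)}\, e^{2\pi i m x},
\]
and since the Fourier coefficients $e^{-\pi^2 m^2/(r\alpha^2)}$ are all strictly positive, this real periodic function of $x$ attains its maximum at $x \equiv 0 \pmod 1$. In particular, $\sum_{k \in \Z} e^{-r\alpha^2(k + \gamma\beta^2 l)^2} \leq \sum_{k \in \Z} e^{-r\alpha^2 k^2}$ for every $l$. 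Inserting this pointwise bound and relabeling the two summation indices using $\alpha^2 = 1/\beta^2$, $\beta^2 = 1/\alpha^2$ delivers exactly the right-hand side, which completes the argument.
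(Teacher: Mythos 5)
Your overall strategy --- complete the square in $k$, then bound the resulting shifted one-dimensional theta function by the unshifted one via Poisson summation and the positivity of the Gaussian's Fourier coefficients --- is exactly the mechanism the paper relies on. The paper does not actually prove this proposition (it refers to \cite[Prop.~3.1]{Faulhuber_Hexagon_2017}), but its proof of the $n$-dimensional analogue, Proposition~\ref{pro_error_estimate_WH}, runs on the same engine: split off the sheared variable, apply Poisson summation, discard the oscillating phase (there via $\cos(\cdot)\le 1$, which is your ``nonnegative Fourier coefficients'' argument in disguise), and apply Poisson summation backwards. So methodologically you are aligned with the source.

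There is, however, one concrete problem: your central algebraic claim does not match the statement as printed. Completing the square in $k$ (using $\alpha\beta=1$, so the cross term is $2\gamma kl$) gives
\begin{equation}
	\alpha^2 k^2 + 2\gamma kl + \beta^2(1+\gamma)^2 l^2
	= \alpha^2\bigl(k+\gamma\beta^2 l\bigr)^2 + \beta^2\bigl((1+\gamma)^2-\gamma^2\bigr) l^2
	= \alpha^2\bigl(k+\gamma\beta^2 l\bigr)^2 + \beta^2(1+2\gamma)\, l^2,
\end{equation}
not $\alpha^2(k+\gamma\beta^2 l)^2+\beta^2 l^2$. The collapse to $\beta^2 l^2$ that your proof hinges on occurs only if the coefficient of $l^2$ is $\beta^2(1+\gamma^2)$; that is also the only reading under which the left-hand quadratic form is unimodular (determinant $\alpha^2\beta^2=1$, matching the right-hand side, consistent with the Gram matrix $S^TS$ of the lattice basis used throughout Section~\ref{sec_Gauss_1}) and under which the statement can hold at all --- with $(1+\gamma)^2$ as written, the form is indefinite for $\gamma<-1/2$ and the left-hand sum diverges. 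So either you have silently corrected a typo in the proposition, in which case the rest of your argument (including the bound $\sum_{k}e^{-r\alpha^2(k+x)^2}\le\sum_{k}e^{-r\alpha^2 k^2}$, which is correctly justified) is complete, or you have made an uncancelled algebra error. You should state explicitly which quadratic form you are proving the inequality for.
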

This yields the following result.

\begin{proposition}
	For $\G(\varphi^\hbar,\L)$ with $\L = \delta^{-1} Q
	\begin{pmatrix}
		\alpha & 0\\
		\beta \gamma & \beta
	\end{pmatrix} \Z^2$, where $\delta > 0$, $Q \in SO(2,\R)$ is an orthogonal matrix and $\alpha \beta = 1$, we have
	\begin{equation}
		\norm{Id - \delta^{-2} \widehat{A}_\G}_{op} = \mathcal{O}\left(e^{-\tfrac{\delta^2}{4 \hbar}\left(\tfrac{1}{\alpha^2} + \tfrac{1}{\beta^2}\right)}\right).
	\end{equation}
\end{proposition}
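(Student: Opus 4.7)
The plan is to adapt the proof of Proposition~\ref{pro_error_estimate_WH} to the one-dimensional setting. Because $Sp(1) = SL(2,\R)$, every lattice is symplectic, so the strategy based on Janssen's representation applies verbatim, and the quadratic exponential sum that arises is exactly of the form controlled by Proposition~\ref{pro_theta_estimate}.

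First I would remove the orthogonal factor $Q$. Since $\varphi^\hbar$ is, up to a unimodular phase, an eigenfunction of every metaplectic operator covering $SO(2,\R)$ (the rotational invariance of the standard Gaussian), both $W\varphi^\hbar$ and $A\varphi^\hbar$ are radial functions of $\sqrt{x^2+p^2}$. Conjugating $\widehat{A}_\G$ by the metaplectic lift of $Q$ is a unitary transformation that preserves operator norms, so it suffices to treat the case $\L = \delta^{-1} S \Z^2$ with $S = \begin{pmatrix} \alpha & 0 \\ \beta\gamma & \beta \end{pmatrix}$.

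Next, Janssen's representation \eqref{Janssen} together with $\|\widehat{T}(z)\|_{op} = 1$ (and the observation that the $(k,l) = (0,0)$ term contributes exactly $Id$ after normalising by $\vol(\L) = \delta^{-2}$) gives
\begin{equation*}
\norm{Id - \delta^{-2}\widehat{A}_\G}_{op} \leq \sum_{(k,l) \in \Z^2 \setminus \{0\}} e^{-\frac{\delta^2}{4\hbar}\bigl(\alpha^2 k^2 + 2\beta^2 \gamma\, k l + \beta^2(1+\gamma^2) l^2 \bigr)},
\end{equation*}
where I have used the explicit formula \eqref{eq_ambi_Gauss} for $A\varphi^\hbar$ evaluated at $\delta S(k,l)$ together with the constraint $\alpha\beta = 1$. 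Proposition~\ref{pro_theta_estimate}, applied with $r = \delta^2/(4\hbar)$, then dominates the right-hand side by the product theta series attached to the diagonal form $\alpha^{-2}k^2 + \beta^{-2}l^2$, and reading off its exponential decay in $\delta$ yields the claimed rate.

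The main (really, only) delicate point is verifying that the rotation $Q$ drops out cleanly from the final estimate despite entering both the lattice and the frame operator; once that is pinned down, the argument is a one-dimensional specialisation of Proposition~\ref{pro_error_estimate_WH} combined with the theta inequality of Proposition~\ref{pro_theta_estimate}, and no new analytic input is needed.
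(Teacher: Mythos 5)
Your argument is exactly the paper's: the rotation $Q$ is discarded via the rotational invariance of $W\varphi^\hbar$ and $A\varphi^\hbar$, Janssen's representation together with \eqref{eq_approx_rate} and \eqref{eq_ambi_Gauss} reduces the operator-norm estimate to a lattice theta sum, and Proposition~\ref{pro_theta_estimate} dominates that sum by the diagonal one, so the approach matches. One small algebra slip: evaluating $|S(k,l)|^2$ for $S=\begin{pmatrix}\alpha & 0\\ \beta\gamma & \beta\end{pmatrix}$ gives the Gram form $(\alpha^2+\beta^2\gamma^2)k^2+2\beta^2\gamma\,kl+\beta^2 l^2$, not $\alpha^2k^2+2\beta^2\gamma\,kl+\beta^2(1+\gamma^2)l^2$ as you wrote (the $\beta^2\gamma^2$ belongs with $k^2$), but this does not affect the structure or conclusion of the argument.
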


\begin{appendix}
	\section{The Fourier Transform of a Gaussian and Poisson's Summation Formula}\label{app_FT_Gauss}
	
	Recall that the Fourier transform of a function $\psi \in \mathcal{S}(\R^n)$ is given by
	\begin{equation}
		\F \psi(p) = \left(\tfrac{1}{2\pi \hbar}\right)^{n/2} \int_{\R^n} \psi(x) e^{-\tfrac{i}{\hbar} p  x} \, d^n x.
	\end{equation}
	The corresponding Plancherel's theorem reads
	\begin{equation}
		\norm{\psi}_{\Lt}^2 = \norm{\F\psi}_{\Lt}^2.
	\end{equation}
	For a Gaussian of type $\phi(x) = e^{-\tfrac{1}{2\hbar}Mx^{2}}$ such that M has positive definite real part and $M^* = M$, the Fourier transform is given by
	\begin{equation}
		\F\phi(p) = \left(\tfrac{1}{2\pi \hbar}\right)^{n/2} \int_{\R^n} e^{-\tfrac{1}{2\hbar}Mx^{2}} e^{-\tfrac{i}{\hbar} p  x} \, d^n x = \det(M)^{-1/2} e^{-\tfrac{1}{2\hbar} M^{-1} p^2}.
	\end{equation}
	This result is just a slight variation of Folland's formula \cite[App.~A, eq.~(1)]{Folland}. With these definitions Poisson's summation formula reads
	\begin{equation}
		\sum_{k \in \Z^n} \psi(k+x) = \sum_{l \in \Z^n} \F \psi(l) \hspace{2pt} e^{\tfrac{i}{\hbar} l x}.
	\end{equation}
	Since we will use the formula only for Gaussians, we omit the technical details for when this formula holds pointwise.

\end{appendix}

\begin{acknowledgement}
	The authors thank the anonymous referee for a careful reading of the manuscript and pointing out some references. Markus Faulhuber was supported by the Austrian Science Fund (FWF) project P27773-N23 and by the Erwin--Schrödinger Program of the Austrian Science Fund (FWF) J4100-N32. Maurice A.~de Gosson was supported by the Austrian Science Fund (FWF) projects P23902-N13 and P27773-N23. David Rottensteiner was supported by the Austrian Science Fund (FWF) project P27773-N23.
\end{acknowledgement}


\end{document}